\documentclass[11pt]{amsart}
\baselineskip=7.0mm
\usepackage{amsmath, amssymb}
\usepackage{upgreek,bm}
\usepackage{xcolor, url}
\usepackage{todonotes} 
\usepackage[all]{xy}
\setcounter{equation}{0}

\usepackage{graphicx}
\setlength{\baselineskip}{1.09\baselineskip}

\usepackage{bm}
\usepackage{enumerate}
\theoremstyle{plain}
\newtheorem{theorem}{Theorem}[section]
\newtheorem{lemma}[theorem]{Lemma}

\newtheorem{proposition}[theorem]{Proposition}
\newcommand{\td}{\mathrm{d}}
\theoremstyle{definition}

\newtheorem{definition}[theorem]{Definition}

\numberwithin{equation}{section}
%\numberwithin{equation}{subsection}

% Ivan's new commands

%\newcommand{\sig}{\sigma}

\setlength{\textwidth}{6.6in} \setlength{\textheight}{8.6in}
\hoffset=-0.83truein
\voffset=-0.1truein

%\setlength{\topmargin}{0.6in} \setlength{\oddsidemargin}{0.3in}
%\setlength{\evensidemargin}{0.3in} \setlength{\textwidth}{6.0in}
%\setlength{\rightmargin}{0.7in} \setlength{\leftmargin}{-0.5in}
%\setlength{\textheight}{7.9in}

 %volume form of \bg%
  %volume form of  g%
  %barred Gamma- Christoffel symbol for barred g%
 %Christoffel symbol for lorentzian 4-metric%
 %barred nabla - covariant derivative w.r.t. barred g%
 %tilde nabla - covariant derivative w.r.t. tilde{g}-4metric%
  %barred g%
 %tilde g - Lorentzian 4-metric%
  %barred Y%
  % Y^\phi%
 %barred w- spacelike component of normal vector to t=f%
 %barred k - second fundamental form for t=0%
 %barred divergence - divergence with respect to barred g%
 %barred rho in Brill's coordinate%
 %barred rho in Brill's%
 %barred theta in Brill's%
 %barred U in Brill's coordinate %
 %barred z in Brill's coordinate%
 %barred alpha in Brill's coordinate%

\newcommand{\tr}{\operatorname{Tr}}

%%command for Sarah to write notes%%%

%%%%%%%%%%%%%%%%%%%%%%%%%%%%%%%%

\usepackage{subfig, tikz}
\usetikzlibrary{decorations.pathmorphing}
\usepackage{graphics}

\usepackage{hyperref} %added
\hypersetup{colorlinks=true, linkcolor=blue, citecolor = blue, urlcolor=blue, filecolor=magenta}

\begin{document}
\title[The Penrose inequality in spherical symmetry with charge and in Gauss-Bonnet gravity]{The Penrose inequality in spherical symmetry with charge and in Gauss-Bonnet gravity}
\author{Hari K. Kunduri}
\address{Department of Mathematics and Statistics and Department of Physics and Astronomy\\
		McMaster University\\
		Hamilton, ON Canada}
	\email{kundurih@mcmaster.ca}
\author{Juan Margalef-Bentabol}
\address{Departamento de Matemáticas\\
Universidad Carlos III de Madrid\\
Leganés, Spain\\
Grupo de Teorías de Campos y Física Estadística, Instituto Gregorio Millán\\
Unidad Asociada al Instituto de Estructura de la Materia CSIC\\
Madrid, Spain\\
\newline
Department of Mathematics and Statistics\\
		Memorial University\\
		St John's, NL Canada
}
	\email{juan.margalef@uc3m.es}
\author{Sarah Muth}
\address{Department of Mathematics and Statistics\\
		Memorial University\\
		St John's, NL Canada}
	\email{smmmuth@mun.ca}

%\date{\today}

\thanks{H. K. Kunduri acknowledges the support of the  NSERC Grant  RGPIN-2018-04887. J.  Margalef-Bentabol acknowledges the support of the PID2020-116567GB-C22 grant funded by MCIN/AEI/10.13039/501100011033}

\begin{abstract}
We establish the spacetime Penrose inequality in spherical symmetry in spacetime dimensions $n+1\geq3$ with charge and cosmological constant from the initial data perspective. We also show that this result extends to the Gauss-Bonnet theory of gravity.
\end{abstract}

\maketitle

\section{Introduction}
\noindent The Penrose inequality \cite{Penrose} is a conjectured inequality which asserts that the mass of an asymptotically flat black hole spacetime is bounded below by a function of the area of a spatial cross-section of its event horizon (see the comprehensive review \cite{Mars}). The bound is saturated if and only if the spacetime is isometric to the Schwarzschild solution. The conjecture is based upon heuristic arguments arising from our physical model of gravitational collapse (the weak cosmic censorship conjecture and final state conjecture \cite{Klainerman}). A counterexample would cast doubt on at least one of these two central conjectures. Generalizations of the inequality that incorporate charge and angular momentum are expected to hold \cite{Dain:2013qia}.

The conjecture can be stated precisely in terms of initial data set $(M, g, k)$ where $(M,g)$ is a Riemannian manifold of dimension $n$ and $k$ is a symmetric rank-2 tensor. The triplet satisfies the appropriate constraint equations (see Equations \eqref{eq: constraint EM} and \eqref{eq: constraints GB} below) for some energy-momentum density $\mu,J$, which encode the matter fields present in the spacetime. They are assumed to satisfy the dominant energy condition $\mu\geq|J|_g$. 

\subsection{Asymptotically flat case}\mbox{}\vspace{.5ex}

\noindent The initial data set is asymptotically flat if there exists a compact set $K \subset M$ and a diffeomorphism $\Phi: M \setminus K \to \mathbb{R}^n \setminus B$ where $B$ is a closed coordinate ball such that in the coordinates $\{x^i\}$ associated to $\Phi$, 
\begin{equation}
    \Phi^* g = \delta + O_2(r^{-\kappa}), \qquad \Phi^* k = O_1(r^{-\kappa-1}), \qquad 
 \Phi^*\mu,  \Phi^*J=O(r^{-2\kappa}), 
\end{equation} for some $\kappa > (n-2)/2$. We have the well-defined geometric invariants known as ADM energy and linear momenta which are defined by
\begin{subequations}
\begin{align}
&E_{\mathrm{ADM}}  = \frac{1}{2(n-1) \omega_{n-1}} \lim_{r \to \infty} \int_{\mathbb{S}_r^{n-1}} (\partial_i g_{ij} - \partial_j g_{ii})\nu^j \; \mathrm{vol}(\mathbb{S}_r^{n-1})\label{eq: EADM}\\
&P_i =\frac{1}{(n-1)\omega_{n-1}}\lim_{r\rightarrow\infty}
\int_{\mathbb{S}_r^{n-1}}(k_{ij}-(\text{Tr}_g k)g_{ij})\nu^{j}\; \mathrm{vol}(\mathbb{S}_r^{n-1}),
\end{align}
\end{subequations} 
where $\omega_{n}$ denotes the volume of the unit sphere $\mathbb{S}^n$, $\nu^j$ denotes the unit-normal vector field to the $n$-sphere $\mathbb{S}_r^{n-1}$ of radius $r$ and $\mathrm{vol}(\mathbb{S}_r^{n-1}) = r^{n-1} \mathrm{vol}(\mathbb{S}^{n-1})$. The ADM mass $m$ is then defined as $m = \sqrt{E^2 - |\mathbf{P}|}^2$ where $\mathbf{P}$ is the momentum $n$-vector associated with the initial data set. By the positive mass theorem, $m\geq 0$ with equality if and only if the initial data can be embedded into Minkowski spacetime \cite{SY1,SY2,Witten}. A quasi-local-in-time proxy for the horizon of a black hole is a marginally outer trapped surface (MOTS). This is a hypersurface $\Sigma \subset M$ (co-dimension 2 in spacetime) which has the property that the outward null expansion $\theta_+$ vanishes, where the outward and inward null expansions $\theta_\pm$ are given by 
\begin{equation}
    \theta_\pm = H_\Sigma \pm \tr_\Sigma k,
\end{equation} where $H_\Sigma$ denotes the mean curvature of $\Sigma$ within $M$. The notion of a MOTS captures the intuitive idea that outward-directed light rays cannot escape $\Sigma$. The \emph{outermost MOTS} (often referred to as an apparent horizon) is defined to be a MOTS that is not enclosed within any other MOTS. The Penrose conjecture can then be stated as 
\begin{equation}\label{PIvac}
    m \geq \frac{1}{2} \left(\frac{A}{\omega_{n-1}}\right)^{\frac{n-2}{n-1}}.
\end{equation} Here $A$ is understood to be the smallest area required to enclose the apparent horizon. In the present work, as we will explain in detail later, $A$ may be identified with the area of the apparent horizon itself. Here equality is to be achieved if and only if the initial data is isometric to a spatial hypersurface in the Schwarzschild spacetime in dimension $n+1$. 

The conjecture has been rigorously established in $n = 3$ for time-symmetric ($k=0$) initial data sets (this is called the `Riemannian-Penrose inequality') in the seminal work of Huisken and Ilmanen \cite{Huisken:2001} and Bray \cite{Bray} using inverse mean curvature flow and conformal flow respectively. These results were subsequently extended to $n \leq 7$ by Bray-Lee \cite{Bray:2007opu}.  For $k \neq 0$, the problem is much more difficult. Progress has been made in the spherically symmetric setting (for a clear exposition, see \cite[Theorem 7.46]{Leetext}) and more recently for cohomogeneity-one initial data \cite[Theorem 1.1]{KhuriKunduri} using the generalized Jang equation approach developed by Bray-Khuri~\cite{BrayKhuri}.  In all these cases, there is an associated rigidity statement, namely that the equality is satisfied if and only if the initial data set is isometric to Schwarzschild initial data.

\subsection{Asymptotically AdS case}\mbox{}\vspace{.5ex}

\noindent In the presence of a negative cosmological constant, it is natural to consider spacetimes that are asymptotically Anti-de Sitter (AdS$_{n+1}$). The initial data $(M,g,k)$ associated with spacelike hypersurfaces of asymptotically AdS spacetimes are asymptotically hyperbolic. More precisely, let $(\mathbb{H}^n, b)$ denote  $\mathbb{R}^n$ with the hyperbolic metric
\begin{equation}
    b = \frac{\td r^2}{V_{\mathrm{hyp}}} + r^2 g_{\scriptscriptstyle \mathbb{S}}.
\end{equation}
where $V_{\mathrm{hyp}}(r):=1+r^2$. The metric, scaled so that $\text{Ric}(b) = -(n-1) b$, models constant time spacelike hypersurfaces in asymptoticallyAdS$_{n+1}$ with spacetime metric $\mathbf{g} = -V_{\mathrm{hyp}}\; \td t^2 + b$ and negative cosmological constant $\Lambda = - n(n-1)/2$ (equivalent to fixing the AdS length scale to $\ell =1$).

An initial data set $(M,g,k)$ is asymptotically hyperbolic if there is a compact set $K \subset M$ and a diffeomorphism $\Phi: M^n \setminus K \to \mathbb{H}^n \setminus B$ where $B$ is a closed coordinate ball such that in the asymptotic chart 
\begin{equation}\label{eq: hatg}
    \hat g = \Phi^* g - b = O_2 (r^{-\kappa}), \qquad \Phi^* k = O_1(r^{-\kappa}),  \qquad 
 \Phi^*\mu,  \Phi^*J=O(r^{-2\kappa}), 
\end{equation} for $\kappa > n/2$. A well-defined notion of energy valid in this setting is the Chrusciel-Nagy energy \cite{Chrusciel:2001qr}, which relies on the so-called static potential (or static lapse function) $W:=\sqrt{V_{\mathrm{hyp}}}$:
\begin{equation}\label{eq: Ehyp}
E_{\mathrm{hyp}} =\frac{1}{2(n-1) \omega_{n-1}} \lim_{r \to \infty} \int_{\mathbb{S}_r^{n-1}} \left[W\,\text{div}_b \hat{g} - W \td (\mathrm{Tr}_b \hat{g}) + (\mathrm{Tr}_b \hat{g}) \td W - \hat{g}(\mathcal{D}W, ~)\right]\!(\mathcal{V}) \mathrm{vol}(\mathbb{S}_r^{n-1}),
\end{equation} 
where $\mathcal{V}^i=W\partial_r^i$ is the unitary outward $b$-normal vector to the coordinate spheres $\{r=r_0\}$ and $\mathcal{D}$ the $b$-Levi-Civita connection.

An extension of \eqref{PIvac} valid in the AdS setting is expected to hold, which is of particular interest from the holographic perspective \cite{EH}. The inequality has been established in spherical symmetry (\cite[Theorem 6]{EF}, \cite[Theorem 1]{Folkestad}, \cite{HS}) with additional various hypotheses. We will recover these results from the initial data perspective along the lines of \cite{Leetext}. More generally, the Penrose inequality in the asymptotically hyperbolic setting has been established for small time symmetric perturbations of Schwarzschild-AdS data~\cite{Ambrozio:2014rfi} and for graphs~\cite{Dahl:2012yh}. The spacetime Penrose inequality for the wider class of cohomogeneity-one initial data has also been established using the Jang equation approach \cite[Theorem 1.2]{KhuriKunduri}.

In this note, we will focus on establishing two extensions of the spacetime Penrose inequality in spherical symmetry: the Einstein-Maxwell theory and the Gauss-Bonnet theory.

\subsection{Einstein Maxwell initial data}\mbox{}\vspace{.5ex}

\noindent Consider \emph{charged initial data sets} $(M,g,k,E,B)$ in $n$ dimensions, which satisfy the constraints arising from Einstein-Maxwell theory with additional (uncharged) matter sources. Here $(E,B)$ denote the electric field one-form and magnetic field $n-2$ form (see Section \ref{section EM} for further details). Under appropriate fall-off conditions on the electric field, we define the electric charge (both in the asymptotically flat and asymptotically AdS) as
\begin{equation}\label{eq: charge}
q = \left(\frac{(n-2)(n-1)}{2}\right)^{-1/2} \frac{1}{\omega_{n-1}} \lim_{r \to \infty} \int_{\mathbb{S}_r^{n-1}} E(\nu)\; \mathrm{vol}(\mathbb{S}_r^{n-1}).
\end{equation}
Consider the asymptotically flat setting in dimension $n=3$. The charged Riemannian-Penrose inequality ($k=0$ but without any symmetry assumptions, and for multiple black holes) has been rigorously established ~\cite{Khuri:2014wqa}. The spacetime with $k\neq 0$ case in spherical symmetry was established in~\cite{DK} using a Jang equation approach. In higher dimensions $n>3$, the charged Riemannian-Penrose inequality was proved for Einstein–Maxwell initial data sets, which could be embedded as a hypersurface in Euclidean space $\mathbb{R}^{n+1}$~\cite{LopesdeLima:2014usy}.
We prove 
\begin{theorem}\label{Thm:PI} Consider a spherically symmetric charged asymptotically flat/hyperbolic initial data set $(M,g,k,E,B)$ of the Einstein-Maxwell equations with uncharged matter sources (with a negative cosmological constant in the latter case) satisfying the constraint equations \eqref{eq: constraint EM} and the dominant energy condition. Then 
\begin{equation}\label{eq: PenroseInequality}
      \mathcal{E} \geq \frac{1}{2} \left[ \left(\frac{A}{\omega_{n-1}}\right)^{\frac{n-2}{n-1}} + q^2 \left(\frac{\omega_{n-1}}{A}\right)^{\frac{n-2}{n-1}} + \frac{1}{\ell^2}\left(\frac{A}{\omega_{n-1}}\right)^{\frac{n}{n-1}} \right],
\end{equation} where $\mathcal{E}$ is the ADM mass (asymptotically flat case, $\ell \to \infty$) or the Chrusciel-Nagy energy $E_{\mathrm{hyp}}$ (asymptotically hyperbolic case, $\ell = 1$) respectively,  $q$ the charge \eqref{eq: charge}, and $A$ the area of the apparent horizon. Moreover, equality holds if and only if the initial data can be isometrically embedded as a spatial hypersurface in the Reissner-Nordstr\"om(-AdS$_{n+1}$) spacetime of mass $\mathcal{E}$ and charge $q$. 
\end{theorem}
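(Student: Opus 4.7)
The plan is to prove the inequality by constructing a monotone Misner-Sharp-type quasi-local mass in the exterior of the apparent horizon, following the philosophy of \cite[Ch.~7]{Leetext} for the spacetime case and its charged refinement \cite{DK}. Spherical symmetry lets me restrict to the component $\{r \geq r_0\}$ exterior to the outermost MOTS $\Sigma_0 = \{r = r_0\}$, write the metric in warped-product form $g = \phi(r)^2\,\dd r^2 + r^2 g_{\mathbb{S}^{n-1}}$ in area-radius coordinates, and decompose $k$ into its radial component and a tracefree-angular piece proportional to $r^2 g_{\mathbb{S}^{n-1}}$. The Maxwell field $(E,B)$ reduces to a single radial function; integrating $\dd\!\star\! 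E = 0$ against the spheres produces the enclosed charge $q(r)$, which is constant in $r$ since the matter is uncharged and in particular $q(r_0)=q(\infty)=q$.

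The central object is the modified Misner-Sharp-Hawking mass
\begin{equation*}
m(r) = \frac{r^{n-2}}{2}\left(1 + \frac{r^2}{\ell^2} - \frac{r^2\, \theta_+\theta_-}{(n-1)^2}\right) + \frac{q(r)^2}{2\, r^{n-2}},
\end{equation*}
where $\theta_\pm = (n-1)|\nabla r|_g/r \pm \tr_\Sigma k$ are the null expansions of the coordinate spheres. Three steps then deliver the theorem. First, I check $\lim_{r\to\infty} m(r) = \mathcal{E}$ by plugging the asymptotic expansions from \eqref{eq: EADM}, \eqref{eq: Ehyp} into the definition: the $\theta_+\theta_-$ term reproduces the ADM or Chrusciel-Nagy boundary integral, and the $r^{n-2}r^2/\ell^2$ piece provides exactly the hyperbolic counter-term. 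Second, at $r = r_0$ the MOTS condition $\theta_+ = 0$ kills the $\theta_+\theta_-$ term, giving $2m(r_0) = r_0^{n-2}(1 + r_0^2/\ell^2) + q^2/r_0^{n-2}$; substituting $A = \omega_{n-1} r_0^{n-1}$ reproduces the right-hand side of \eqref{eq: PenroseInequality}. Third, and this is the core calculation, I differentiate $m(r)$ and combine the Hamiltonian constraint with the radial component of the momentum constraint \eqref{eq: constraint EM} to express $m'(r)$ as a spherical integral of a manifestly non-negative combination: the DEC density $\mu - |J|_g$, the electromagnetic stress-energy density (positive by construction), and a term proportional to $-\theta_-$, which is non-negative on the exterior of the outermost MOTS.

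Equality in \eqref{eq: PenroseInequality} forces $m'(r)\equiv 0$, which collapses all of these non-negative contributions: $\mu = |J|_g$, the non-electromagnetic matter vanishes in the exterior, and the rigid algebraic relations between $\phi$, $|\nabla r|_g$, and the components of $k$ become $\phi(r)^{-2}|\nabla r|_g^2 = 1 + r^2/\ell^2 - 2m/r^{n-2} + q^2/r^{2(n-2)}$ together with an explicit expression for $k$ that coincides with the second fundamental form of a spherically symmetric spacelike hypersurface in Reissner-Nordstr\"om(-AdS$_{n+1}$); a standard change-of-variable then produces the isometric embedding. The main obstacle I anticipate is the monotonicity step: the corrections in $m(r)$ (both the $\theta_+\theta_-$ piece controlling the $k$-dependence and the charge piece) must be chosen so that, after invoking \emph{both} constraint equations, cross-terms between the radial component of $k$ and the electric field cancel exactly, leaving only DEC-positive terms. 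It is the necessity of the momentum constraint to effect this cancellation that distinguishes the spacetime ($k \neq 0$) argument from the Riemannian Penrose inequality, and it is the reason the spherically symmetric hypothesis is so useful: it reduces all tensorial cross-terms to a single scalar identity that can be checked directly.
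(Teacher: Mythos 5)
Your proposal follows essentially the same route as the paper: the same charged Hawking/Misner--Sharp mass $m(r)=m_{\mathrm{H}}(r)+\frac{r^{n}}{2\ell^{2}}+\frac{q^{2}}{2r^{n-2}}$, monotonicity from the two constraint equations plus the dominant energy condition and the outermost-MOTS property, the limit $m(r)\to E_{\mathrm{ADM}}$ or $E_{\mathrm{hyp}}$, evaluation at the horizon where $\theta_{+}\theta_{-}=0$, and rigidity by realizing the data as a graph $t=f(r)$ in Reissner--Nordstr\"om(-AdS$_{n+1}$). One small imprecision that does not affect the verdict: in the equality case the relation you state for $|\nabla r|_{g}^{2}$ must carry the extra term $\bigl(r k_{\scriptscriptstyle\mathbb{S}}/(n-1)\bigr)^{2}$, i.e. $V=V_{\scriptscriptstyle\mathrm{RN}}+\bigl(r k_{\scriptscriptstyle\mathbb{S}}/(n-1)\bigr)^{2}$, which is precisely the piece absorbed by the graph function $f$ when matching the induced metric and second fundamental form.
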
 Here, by spherically symmetric, we mean that $M$ is diffeomorphic to $[0,\infty) \times \mathbb{S}^{n-1}$ with boundary $\partial M$ that is an outermost MOTS (see subsection \ref{SSdata} below). The asymptotically flat case of this result was proved in \cite{Hidayat:2016iez} under the assumption that the initial data set is maximal $(\tr k =0)$.

\subsection{Gauss-Bonnet gravity}\mbox{}\vspace{.5ex}

\noindent Our second result concerns the spacetime Penrose inequality in spherical symmetry within the setting of Einstein Gauss-Bonnet gravity (the reader is referred to the recent review \cite{Fernandes:2022zrq} and references therein). There is a vast variety of proposed modifications and generalizations of general relativity, typically including additional fundamental fields and coupling parameters. A particular mathematically natural generalization is Einstein–Gauss–Bonnet theory, first proposed by Lanczos \cite{Lanczos} and later generalized by Lovelock \cite{Lovelock1,Lovelock2}. The latter's family of theories does not introduce any new fields and is unique in preserving the property that the field equations only contain up to second derivatives of the metric tensor. In particular, Lovelock showed that in spacetime dimension 4, the Einstein-Hilbert functional is the unique choice which gives rise to a left-hand side of the field equations (i.e. not involving matter fields) that is symmetric, divergenceless, and involves no more than second derivatives of the metric. In dimensions 5 and 6, one may add an additional term $L_{GB}$ to the Einstein-Hilbert action consisting of a certain combination of scalar invariants while preserving these properties:
\begin{equation}\label{GBint}
    L_{\mathrm{GB}}:= \mathbf{R}^2 - 4 \mathbf{R}_{ab} \mathbf{R}^{ab} + \mathbf{R}_{abcd}\mathbf{R}^{abcd}, 
\end{equation} where $\mathbf{R}, \mathbf{R}_{ab}, \mathbf{R}_{abcd}$ denote respectively the scalar curvature, Ricci tensor, and Riemann tensor of the spacetime metric $\mathbf{g}$. \eqref{GBint} is referred to as the `Gauss-Bonnet term'. It is a topological invariant in four spacetime dimensions and does not contribute to the equations of motion (its variation is a pure divergence). In dimensions greater than 6, Lovelock's theorem allows for additional possibilities beyond the Gauss-Bonnet term, but we will restrict attention to \eqref{GBint} in the present work. 

An obvious question is whether the chain of arguments 
which led to the Penrose conjecture continues to hold within Gauss-Bonnet gravity. If so, one would expect an analogous geometric inequality to hold. We answer this in the affirmative within spherical symmetry: 
\begin{theorem}\label{PI:GB}
Consider a spherically symmetric asymptotically flat or asymptotically hyperbolic initial data set $(M,g,k)$ of the Einstein-Gauss-Bonnet equations (with a negative cosmological constant in the latter case) satisfying~\eqref{eq: constraints GB} with matter sources satisfying the dominant energy condition. Let $\tilde\alpha = (n-2)(n-3)\alpha$ where $\alpha$ is the Gauss-Bonnet coupling constant. Then
\begin{equation}\label{eq: PI GB}
      \mathcal{E} \geq \frac{1}{2} \left[ \left(\frac{A}{\omega_{n-1}}\right)^{\frac{n-2}{n-1}} +  \frac{1}{\ell^2}\left(\frac{A}{\omega_{n-1}}\right)^{\frac{n}{n-1}}
 + \tilde\alpha \left(\frac{A}{\omega_{n-1}}\right)^{\frac{n-4}{n-1}} \right],
\end{equation} where $\mathcal{E}$ is the ADM mass in the asymptotically flat case ($\ell \to \infty$) and the Chrusciel-Nagy energy $E_{\mathrm{hyp}}$ in the asymptotically hyperbolic case (with length scale $\ell = (1 - \widetilde\alpha)^{-1/2}$ with $\widetilde \alpha \in (0,1/2)$) respectively, and $A$ is the area of the apparent horizon. Moreover, equality holds if and only if the initial data can be isometrically embedded as a spatial hypersurface in Schwarzschild(-\emph{AdS}$_{n+1})$ Gauss-Bonnet spacetime of mass $\mathcal{E}$ \eqref{GBSch}.
\end{theorem}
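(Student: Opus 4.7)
The plan is to adapt the spherically symmetric Misner--Sharp--Kodama quasi-local mass monotonicity argument used to establish the Einstein--Maxwell case (Theorem~\ref{Thm:PI}, following~\cite{Leetext}) to Einstein--Gauss--Bonnet gravity. The goal is to construct a radial function $m(r)$ that coincides with $\mathcal{E}$ in the limit $r\to\infty$, reduces to the right-hand side of~\eqref{eq: PI GB} at the outermost MOTS, and is monotone non-decreasing under the dominant energy condition.

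Spherical symmetry allows me to work in an areal gauge $g = V(r)^{-1}dr^2 + r^2 g_{\mathbb{S}^{n-1}}$ on $M = [r_H,\infty)\times\mathbb{S}^{n-1}$ and decompose the $O(n)$-invariant extrinsic curvature $k$ into a radial part $k_n(r)$ and an angular part $k_s(r)$. The mean curvature of coordinate spheres is $H = (n-1)\sqrt{V}/r$, so $\theta_+ = H + \tr_\Sigma k = (n-1)(\sqrt{V}/r + k_s) = 0$ at $r=r_H$ reduces to $V(r_H) = r_H^2 k_s(r_H)^2$. Setting $\mathcal{Q}(r) := 1 - V(r) + r^2 k_s(r)^2$ (the quantity encoding the spacetime gradient norm of the area radius on the slice, using $\mathcal{L}_n r = -r k_s$), I would define a Gauss--Bonnet-modified quasi-local mass schematically of the form
\[
m(r) := \tfrac{1}{2}\bigl[\,r^{n-2}\mathcal{Q}(r) + \tilde\alpha\, r^{n-4}\mathcal{Q}(r)^2 + \ell^{-2}\,r^n\,\bigr],
\]
with the coefficient of the $\ell^{-2}$ term (and possibly an additive shift) adjusted in the AH case so that pure Gauss--Bonnet--AdS background gives $m\equiv 0$ and the value at infinity matches the Chrusciel--Nagy energy~\eqref{eq: Ehyp}; this is where the renormalization $\ell=(1-\tilde\alpha)^{-1/2}$ enters, dictating the effective AdS radius that is a GB vacuum solution. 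At $r=r_H$, $\mathcal{Q}(r_H)=1$ gives $m(r_H)$ equal to the right-hand side of~\eqref{eq: PI GB} with $A = \omega_{n-1} r_H^{n-1}$.

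The core of the argument is the monotonicity of $m(r)$. Using the spherically symmetric reduction of the modified constraints~\eqref{eq: constraints GB} (both Hamiltonian and radial momentum), I would compute $m'(r)$ and reorganize it in the form
\[
m'(r) = r^{n-1}\,\mathcal{K}(r)\,(\mu - |J|_g) + r^{n-1}\,\mathcal{R}(r),\qquad \mathcal{K}(r) := 1 + 2\tilde\alpha\, r^{-2}\mathcal{Q}(r),
\]
with a non-negative remainder $\mathcal{R}(r)\geq 0$. Positivity of the Gauss--Bonnet prefactor $\mathcal{K}$ on $[r_H,\infty)$, combined with the dominant energy condition, forces $m'\geq 0$; integrating from $r_H$ to infinity then yields~\eqref{eq: PI GB}. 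The rigidity statement follows because equality in the inequality combined with monotonicity forces $m(r)\equiv \mathcal{E}$, hence $m'\equiv 0$, which in turn forces $\mu = |J|_g$ and $\mathcal{R}\equiv 0$; the resulting algebraic--differential system for $(V, k_n, k_s)$ uniquely determines the static-slice data of the Schwarzschild--Gauss--Bonnet(-AdS) solution~\eqref{GBSch} of mass $\mathcal{E}$.

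The principal obstacle is the algebraic identification of the clean decomposition for $m'(r)$ above: the nonlinear Gauss--Bonnet contributions to the Hamiltonian constraint introduce higher-order cross-couplings between $V$, $V'$, and the extrinsic curvature components that must be reorganized into the displayed positive structure, and it is exactly this requirement that fixes the precise form of the quasi-local mass. A secondary subtlety is ensuring $\mathcal{K}(r) > 0$ throughout $[r_H,\infty)$; in the asymptotically hyperbolic case this is what drives the explicit hypothesis $\tilde\alpha \in (0,1/2)$, while in the asymptotically flat case it follows from positivity of $\mathcal{Q}$ outside the outermost MOTS.
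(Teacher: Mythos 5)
Your quasi-local mass is exactly the paper's: with $m_{\mathrm{H}}=\tfrac12 r^{n-2}\mathcal{Q}$, your $m(r)$ equals $m_{\mathrm{GB}}=m_{\mathrm{H}}+m_{\ell}+2\tilde\alpha\, m_{\mathrm{H}}^2/r^n$, and the overall scheme (monotonicity, value at the MOTS, asymptotic limit, rigidity) is the paper's. The genuine gap is that the heart of the proof -- the monotonicity computation from the Gauss--Bonnet constraints \eqref{eq: constraints GB} -- is precisely what you defer as the ``principal obstacle,'' and the structure you guess for it is not what the constraints actually produce. The paper reduces the quadratic curvature sources in spherical symmetry (computing $M_{ijkl}$ and $N_{ijk}$ via Gauss--Codazzi) and finds that $\widetilde{\mu}^{(\alpha)}$ and $\nu\cdot\widetilde{J}^{(\alpha)}$ are themselves expressible through $m_{\mathrm{H}}$, $m_{\mathrm{H}}'$ and $N_1$; these terms cancel \emph{exactly} against the derivative of the $2\tilde\alpha\, m_{\mathrm{H}}^2/r^n$ correction, leaving $m_{\mathrm{GB}}'(r)=\frac{8\pi r^{n-1}}{n-1}\bigl[\widetilde{\mu}-\frac{k_{\scriptscriptstyle\mathbb{S}}}{H}(\nu\cdot\widetilde{J})\bigr]$, i.e.\ the \emph{same} formula as in the Einstein case with no Gauss--Bonnet prefactor. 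Your ansatz $m'=r^{n-1}\mathcal{K}(\mu-|J|_g)+r^{n-1}\mathcal{R}$ with $\mathcal{K}=1+2\tilde\alpha r^{-2}\mathcal{Q}$ is the chain-rule guess one would make if the constraints were unmodified, but in Gauss--Bonnet gravity the modification of the constraints absorbs that factor; without carrying out the reduction you have neither the inequality nor the precise vanishing conditions needed for rigidity. Relatedly, your ``secondary subtlety'' misdiagnoses the hypothesis $\tilde\alpha\in(0,1/2)$, $\ell=(1-\tilde\alpha)^{-1/2}$: in the paper this is a normalization of the asymptotics of $V_{\mathrm{GB}}$ (so that $V_{\mathrm{GB}}\sim 1+r^2$ and the Chrusciel--Nagy energy is computed against the unit hyperbolic background), not a positivity condition on a prefactor -- though your instinct that the asymptotically hyperbolic limit needs extra care is sound, since $m_{\mathrm{H}}^2/r^n$ does not decay there and the identification of the limit with $E_{\mathrm{hyp}}$ is more delicate than in the flat case.

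The rigidity step as you state it is also off. Equality forces $m_{\mathrm{GB}}'\equiv 0$, hence $\widetilde{\mu}=0$ and $\widetilde{J}=0$, but it does \emph{not} ``uniquely determine the static-slice data'': the angular extrinsic curvature $k_{\scriptscriptstyle\mathbb{S}}$ remains free, saturation only fixes $V=V_{\mathrm{GB}}+\bigl(r k_{\scriptscriptstyle\mathbb{S}}/(n-1)\bigr)^2$, and the saturating data are generically not time-symmetric. As in the Reissner--Nordstr\"om argument, one must \emph{construct} the graph function $f$ (equation \eqref{eq: sol fprime} with $V_{\scriptscriptstyle\mathrm{RN}}\to V_{\scriptscriptstyle\mathrm{GB}}$) and verify that the given $(g,k)$ coincides with the data induced on the slice $\{t=f(r)\}$ of \eqref{GBSch}; the correct conclusion is embeddability as some spherically symmetric spacelike graph in \eqref{GBSch}, not uniqueness of static data.
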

\section{The Einstein-Maxwell theory}\label{section EM}

\subsection{Equations of motion}\mbox{}\vspace{.5ex}

\noindent Consider a spacetime $(\mathbf{M}, \mathbf{g})$ of dimension $d = n + 1$ and the Einstein-Maxwell action 

\begin{equation}
S(\mathbf{g},A,\phi)= \frac{1}{16\pi} \int_{\mathbf{M}} \left(\mathbf{R}(\mathbf{g}) + \frac{n(n-1)}{\ell^2}- 2|F|_{\mathbf{g}}^2\right)\; \mathrm{vol}(\mathbf{g})+\int_{\mathbf{M}}\hat{L}_m(\mathbf{g},\phi),
\end{equation}
where $\mathbf{R}(\mathbf{g})$ is the $\mathbf{g}$-Ricci scalar, $F:=\td A$, $|F|_{\mathbf{g}}^2:=\frac{1}{2!}F_{ab}F^{ab}$, $\phi$ denotes the collection of non-charged fields, and $\hat{L}_m$ their Lagrangian. The equations of motion are
\begin{subequations}
\begin{align}
&\mathbf{G}_{ab} =2 \left(F_{ac}F_b^{~c} -\frac{1 }{2}|F|_{\mathbf{g}}^2\mathbf{g}_{ab}\right) + \frac{n(n-1)}{2 \ell^2} \mathbf{g}_{ab}+8\pi\hat{T}_{ab}\label{eq: EOM wrt g},\\
&\td \star_{\mathbf{g}} F = 0\label{eq: EOM wrt A},
\end{align}
\end{subequations}
where $\mathbf{G}_{ab}$ is the $\mathbf{g}$-Einstein tensor, the term in parentheses of \eqref{eq: EOM wrt g} is the electromagnetic energy-momentum tensor $T_{ab}$ and $\hat{T}_{ab}$ is the energy-momentum tensor associated with the non-charged fields. Notice that we have also the trivial condition $\td F =0$ that follows from the definition of $F$.

\subsection{Initial Value Problem}\label{subsection: initial value problem EM}\mbox{}\vspace{.5ex}

\noindent We now work out the constraint equations induced on a spacelike hypersurface $M$, relevant for the initial value formulation. We use $\{i,j,k\ldots\}$ abstract indices on $M$ and denote $\imath:M\hookrightarrow \mathbf{M}$ the inclusion map (hence $\imath_a^i$ denotes the pullback/pushforward) which induces the metric $g:=\iota^*\textbf{g}$ and its Levi-Civita covariant derivative $D$. We also have the future-pointing unit $\mathbf{g}$-normal vector field  $n^a$ over $M$. The orientation of $M$ is given by $\mathrm{vol}(g):=\imath^*\iota_{\vec{n}}\mathrm{vol}(\mathbf{g})$ (equivalently, with a slight abuse of notation, $\mathrm{vol}(\mathbf{g})=-n\wedge \mathrm{vol}(g)$). It will be useful to introduce the following fields:
    \begin{subequations}
\begin{align}
    &\text{Extrinsic curvature} &&  k=\imath^*\nabla n&&  k_{ij}=\imath^a_i\imath^b_j \nabla_a n_b\\
    &\text{Electric field one-form} && E=\imath^*\iota_{\vec{n}}F&& E_{\scriptscriptstyle i}=\imath^b_i n^aF_{ab}\\
    &\text{Magnetic field $(n-2)$-form} && B=\star_g(\imath^*F)&& B_{i_3\cdots i_n}=\frac{1}{2}(\imath^*F)_{ij}\mathrm{vol}(g)^{\phantom{i_3\cdots i_n}ij}_{i_3\cdots i_n}\\
    &\text{Uncharged energy density} && \hat{\mu}=\iota_{\vec{n}}\iota_{\vec{n}}\hat{T}&& \hat{\mu}=n^an^b\hat{T}_{ab}\\
    &\text{Uncharged momentum density} &&  \hat{J}=\imath^*\iota_{\vec{n}}\hat{T}&&  \hat{J}_i=\imath^b_i n^a\hat{T}_{ab}.
\end{align}
    \end{subequations}

Contracting \eqref{eq: EOM wrt g} with $n^an^b$ and $\imath^a_in^b$, and contracting  \eqref{eq: EOM wrt A} with $n^a$, leads to the constraints
\begin{subequations}\label{eq: constraint EM}
\begin{align}
    &16 \pi \hat{\mu} = R(g) + (\text{tr}_g k)^2 - |k|_g^2 - 2|E|_g^2 -2 |B|_g^2 + \frac{n(n-1)}{\ell^2}\label{eq: constraint mu}\\
    &8\pi \hat{J} = \text{div}_g (k - (\text{tr}_g k)g) + 2 \star_g (E \wedge B)\label{eq: constraint J}\\
    &\text{div}_g E=0\qquad\qquad\equiv\qquad\qquad D_i E^i=0\label{eq: constraint E},
\end{align} where $|\omega|^2:=\omega_{a_1\ldots a_p}\omega^{a_1\ldots a_p}/p!$ denotes the norm of a $p${}-{}form $\omega$. Finally, from $\td F=0$, we obtain the last constraint 
\begin{align} 
\text{div}_g B =0\qquad\qquad\equiv\qquad\qquad D_i B^{ij_2\cdots j_{n-3}}=0\label{eq: constraint B}.
\end{align}
    \end{subequations}

The Riemannian manifold $(M,g,k,E,B)$ forms an \emph{Einstein-Maxwell initial data set} if \eqref{eq: constraint mu}-\eqref{eq: constraint B} are satisfied for some $(\hat{\mu}, \hat{J})$.

\subsection{Spherically symmetric initial data set}\label{SSdata} \mbox{}\vspace{.5ex}

\noindent We consider a spherically symmetric initial data set $(M,g,k,E,B)$ satisfying \eqref{eq: constraint EM}. This symmetric condition forces several restrictions on all the elements of the initial data set. First, notice that $M$ is a warped product $I\times_\rho\mathbb{S}^{n-1}$ for some interval $I$ (we assume the initial data set has an apparent horizon boundary, which is invariant under the isometries of the ambient geometry, thus $I= [r_+,\infty)$ for some $r_+>0$) and some positive function $\rho:I\to\mathbb{R}^+$. That means that if we introduce a radial coordinate $s$, the metric can be written as
\begin{equation}\label{eq: spherically symmetric metric}
g = \td s^2 + \rho(s)^2g_{\scriptscriptstyle \mathbb{S}}. 
\end{equation} Notice that the pullback of $g_{\scriptscriptstyle \mathbb{S}}$ to $\mathbb{S}^{n-1}$ (with abstract indices $\{A,B,C\ldots\}$) is the standard round metric $\td\Omega^2$. It is worth mentioning that the vector field $(\partial_s)^i=g^{ij}(\td s)_j$ is tangent to the unit radial geodesics. 

The symmetric condition on the extrinsic curvature implies that there exist functions $\widetilde{k}_{\scriptscriptstyle I},\widetilde{k}_{\scriptscriptstyle \mathbb{S}}:I\to\mathbb{R}$ such that
\begin{equation}\label{eq: spherically symmetric extrinsic curvature}k = \widetilde{k}_{\scriptscriptstyle I}(s) \td s^2 + \frac{\widetilde{k}_{\scriptscriptstyle \mathbb{S}}(s)}{n-1} \rho(s)^2 g_{\scriptscriptstyle \mathbb{S}}. 
\end{equation}
Finally, the symmetric condition implies also that the electric $1$-form field and the magnetic $(n-2)$-form field must be purely radial, so that $E$ is given by
\begin{equation}\label{eq: spherically symmetric electric field}E = E_{\scriptscriptstyle I}(s) \td s 
\end{equation}
for some function $E_{\scriptscriptstyle I}:I\to\mathbb{R}$. Meanwhile, the magnetic field must vanish (unless $n-2=1$, in which case it is a $1$-form, which also has to be radial), as we now argue. We first demonstrate the following lemma. 
\begin{lemma}\label{rotinvforms}
    There are no $SO(n)$-invariant $(n-2)${}-{}forms on $\mathbb{S}^{n-1}$. 
\end{lemma}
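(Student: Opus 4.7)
The plan is to exploit the transitive action of $SO(n)$ on $\mathbb{S}^{n-1}$ together with a small amount of representation theory of the isotropy group. Fix a base point $p \in \mathbb{S}^{n-1}$ and recall that its stabilizer under $SO(n)$ is isomorphic to $SO(n-1)$. Because the action is transitive, any $SO(n)$-invariant $(n-2)$-form $\omega$ on $\mathbb{S}^{n-1}$ is determined by its value $\omega_p \in \Lambda^{n-2} T_p^* \mathbb{S}^{n-1}$; conversely, $\omega$ exists if and only if $\omega_p$ is fixed by the induced linear action of the stabilizer $SO(n-1)$ on $\Lambda^{n-2} T_p^* \mathbb{S}^{n-1}$. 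So the problem reduces to showing that this finite-dimensional representation has no nonzero invariants.

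Next I would identify the representation explicitly. The round metric and orientation on $\mathbb{S}^{n-1}$ are $SO(n)$-invariant, so in particular their restrictions to $T_p^*\mathbb{S}^{n-1} \cong \mathbb{R}^{n-1}$ are preserved by the stabilizer $SO(n-1)$. The fiberwise Hodge star on this $(n-1)$-dimensional inner product space is then an $SO(n-1)$-equivariant isomorphism
\begin{equation*}
\star : \Lambda^{n-2}(\mathbb{R}^{n-1})^* \xrightarrow{\;\cong\;} (\mathbb{R}^{n-1})^*,
\end{equation*}
so that $\Lambda^{n-2} T_p^*\mathbb{S}^{n-1}$ is isomorphic, as an $SO(n-1)$-representation, to the (dual of the) standard representation of $SO(n-1)$ on $\mathbb{R}^{n-1}$.

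To finish, I would invoke the irreducibility of the standard representation of $SO(n-1)$ on $\mathbb{R}^{n-1}$ for $n-1\geq 2$, which in turn follows from the transitive action of $SO(n-1)$ on the unit sphere of $\mathbb{R}^{n-1}$: any invariant subspace would either be trivial or would have to contain a full sphere worth of unit vectors and hence equal the whole space. Since the standard representation is nontrivial and irreducible, it admits no nonzero fixed vectors, so $\omega_p = 0$, and then invariance plus transitivity force $\omega \equiv 0$ on $\mathbb{S}^{n-1}$.

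The argument is essentially mechanical; the only step that deserves care is verifying that the fiber Hodge star is $SO(n-1)$-equivariant, which is immediate from the fact that $SO(n-1)$ preserves both the inner product and the orientation on $T_p^*\mathbb{S}^{n-1}$. I do not expect any genuine obstacle; the regime $n-1 \geq 2$ (i.e.\ $n\geq 3$) is implicit in the statement, and this is precisely the range in which the standard representation of $SO(n-1)$ is irreducible.
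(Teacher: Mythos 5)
Your proof is correct, and it takes a genuinely different route from the paper. You reduce the statement to a fixed-point problem for the isotropy representation: transitivity of $SO(n)$ plus invariance determine the form by its value at one point $p$, the stabilizer $SO(n-1)$ acts on $\Lambda^{n-2}T_p^*\mathbb{S}^{n-1}$, the pointwise Hodge star (equivariant because the stabilizer preserves both the metric and the orientation of $T_p\mathbb{S}^{n-1}$) identifies this with the standard representation of $SO(n-1)$ on $\mathbb{R}^{n-1}$, and the latter has no nonzero fixed vectors for $n-1\geq 2$. The paper instead argues tensorially and without representation-theoretic language: from a nowhere-vanishing invariant $(n-2)$-form $\alpha$ it builds the traceless symmetric invariant $2$-tensor $\beta_{AB}=\alpha_{AC_2\cdots C_{n-2}}\alpha_B^{\phantom{B}C_2\cdots C_{n-2}}-\frac{|\alpha|^2}{n-1}(\td\Omega^2)_{AB}$, uses the uniqueness (up to scale) of the invariant metric on $\mathbb{S}^{n-1}$ to force $\beta=0$, and then derives a contradiction from the fact that an $(n-2)$-form on an $(n-1)$-dimensional space has a nontrivial kernel. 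The two arguments are close cousins --- the uniqueness of the round metric invoked in the paper is itself an isotropy-irreducibility statement --- but yours is more structural: it immediately generalizes to invariant $k$-forms in other degrees and avoids the pointwise kernel/rank manipulation, at the cost of invoking irreducibility of the standard representation (which you justify correctly via transitivity on unit spheres). One small point you handle well and the paper leaves implicit: the restriction $n\geq 3$, without which the statement fails (constant $0$-forms on $\mathbb{S}^1$ are invariant); also note that your argument shows directly that an invariant form vanishing at one point vanishes identically, whereas the paper's proof starts from a nowhere-vanishing $\alpha$ and tacitly uses invariance to pass from ``nonzero somewhere'' to ``nowhere zero''.
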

\begin{proof}
    Suppose there was a non-vanishing $SO(n)$-invariant $(n-2)${}-{}form $\alpha_{A_1 \ldots A_{n-2}}$. Then 
    \[\beta_{AB} = \alpha_{AC_2\cdots C_{n-2}} \alpha_B^{\phantom{B}C_2 \cdots C_{n-2}} - \frac{|\alpha|_{\scriptscriptstyle \mathbb{S}}^2}{n-1} (\td\Omega^2)_{AB}\]
    is a traceless symmetric $SO(n)$-invariant $(0,2)$ tensor on $\mathbb{S}^{n-1}$ ($\td\Omega^2$ is used to raise/lower the indices and compute the norm). Then, $\gamma_t:= \td\Omega^2+ t \beta$ is a 1-parameter family of $SO(n)$-invariant metrics on $\mathbb{S}^{n-1}$ for sufficiently small $|t|$. However, up to scaling, there is a unique such metric, namely $\td\Omega^2$. Thus, $\beta$ (which is traceless) must be proportional to $\td\Omega^2$ (which is not traceless), and hence it must be zero, leading to
    \begin{equation}\label{eq: alpha^2=Id}\alpha_{AC_2\ldots C_{n-2}} \alpha^{BC_2 \ldots C_{n-2}}= \frac{|\alpha|_{\scriptscriptstyle \mathbb{S}}^2}{n-1}\delta_A^B.
    \end{equation}
    Since $\alpha$ is non-vanishing, being a $(n-2)${}-{}form (``codimension 1''), it must have a nontrivial kernel. Thus, there exists $V^A\neq0$ such that $\alpha_{AC_2\ldots C_{n-2}}V^A=0$. Plugging this into \eqref{eq: alpha^2=Id}, shows that $V^A=0$, leading to a contradiction.    
\end{proof} Let $X_s: \mathbb{S}^{n-1} \hookrightarrow M$ be the embedding mapping of the unit sphere into the sphere of radius $s$ in $M$. Define $a := \iota_{\partial_s}B$ and $b:=B - \td s \wedge a$. We may then write $B = \td s \wedge a + b$. Note that $B$ and $\td s$ are both rotation-invariant, and hence so are $a$ and $b$. Consider the pullback $\tilde b = (X_s)^* b \in \Omega^{n-2}(\mathbb{S}^{n-1})$. By the above Lemma, $(X_s)^*b =0$ for all $s$ and hence $b=0$ since $b$ has no $\mathrm{d}s$ term. Now consider $\mu:=\star_g B=\star_g(\mathrm{d}s\wedge a)\in \Omega^2(M)$. It is clear that $\mu$ has no $\mathrm{d}s$ term, hence $\iota_{\partial_s} \mu=0$. Since $\td \star_g B=0$, $\mu$ must be independent of $s$, so $\mu$ can be understood (with a slight abuse of notation) as a $2$-form on the sphere $\mu \in \Omega^2(\mathbb{S}^{n-1})$. This means that, in particular, $\mu$ is closed (and in fact exact for $n>3$). Since $\mu$ can be locally expressed at a point $p \in \mathbb{S}^{n-1}$ as a square antisymmetric matrix, its rank must be even. Hence if $n-1$ is odd, $\mu_p$ must have a non-trivial kernel, and a minor modification of the argument used in Lemma \ref{rotinvforms} implies $\mu_p =0$. By rotational invariance it must vanish everywhere on the sphere. If $n-1>2$ is even, $\mu$ cannot have maximal rank either, for otherwise it would define a symplectic form (a closed, non-degenerate two form), which does not exist on even-dimensional spheres $\mathbb{S}^{n-1}$ with $n> 3$. Hence $\mu=0$ by Lemma \ref{rotinvforms}. The only other case is $n=3$, in which case $B = B(s) \td s$. Then $E \wedge B=0$. By electromagnetic duality, we may consider a transformation into a duality frame in which $B=0$. In summary, we will focus on the case with $B=0$ in the following. 

%\begin{remark}
 %   If $p=0$, then $\alpha$ is a function on $\mathbb{S}^{n-1}$. To be $SO(n)$-invariant, it must be constant but not necessarily zero. Likewise, if $p=n-1$ and we impose that $\alpha$ is $SO(n)$-invariant, then $\star_{\scriptscriptstyle \mathbb{S}}\alpha$ is a 0-form which must be constant. Hence, $\alpha$ is a multiple of the volume form, which is indeed $SO(n)$-invariant. Notice that in this case, the previous argument does not apply since it is immediate to prove that $\beta_{AB}$ vanishes identically using $\mathrm{vol}(\td \Omega^2)_{AC_2\cdots C_{n-1}}\mathrm{vol}(\td \Omega^2)^{BC_2\cdots C_{n-1}}=(n-2)!\delta^B_A$.

%\end{remark}

%We conclude that $E$ must be radial and $B$ is either zero or radial. In any case, $E \wedge B =0$, so the momentum constraint \eqref{eq: constraint J} is unaffected by the presence of electromagnetic fields.  In $n=3$, by electromagnetic duality, we may consider a transformation into a duality frame in which $B=0$. Hence, we will focus on the case with $B=0$ in the following. 

 Recall that the null expansions are given by $\theta^\pm = k_{\scriptscriptstyle \mathbb{S}} \pm H$, where $\mathbb{S}$ is a level set of $r$, $k_{\scriptscriptstyle \mathbb{S}}$ coincides with the trace of $k$ with respect to the induced metric $\rho(s)^2\td\Omega^2$ on $\mathbb{S}$ (see \eqref{eq: spherically symmetric extrinsic curvature}), and $H$ is the mean curvature of the level set.  The requirement that the level set be a MOTS is $\theta^+=0$. The assumption that the apparent horizon boundary is an outermost MOTS is equivalent to imposing $\rho'(s) > 0$. Furthermore, the outermost property implies that the outermost MOTS is also outer minimizing \cite[Theorem 7.46]{Leetext}. We may thus use $r:=\rho(s)$ as a coordinate. Denoting $\widetilde{V}(s):=\rho'(s)^2$ and $V(r):=\widetilde{V}(\rho^{-1}(r))$, the metric \eqref{eq: spherically symmetric metric} and extrinsic curvature \eqref{eq: spherically symmetric extrinsic curvature} read
\begin{equation}\label{sphericalmetric}
g = \frac{\td r^2}{V(r)} + r^2 g_{\scriptscriptstyle \mathbb{S}}\qquad\qquad k = k_{\scriptscriptstyle I}(r) \frac{\td r^2}{V(r)} + \frac{k_{\scriptscriptstyle \mathbb{S}}(r)}{n-1} r^2 g_{\scriptscriptstyle \mathbb{S}}, 
\end{equation}where $k_{\scriptscriptstyle I}(r)=\widetilde{k}_{\scriptscriptstyle I}(\rho^{-1}(r))$ and $k_{\scriptscriptstyle \mathbb{S}}(r):=\widetilde{k}_{\scriptscriptstyle \mathbb{S}}(\rho^{-1}(r))$. Observe that the outward unit normal to a surface of constant $r$ is given by $\nu^i=\sqrt{V}\partial_r^i$, which implies that $\nu_i=(\td r)_i/\sqrt{V}$. In particular, its mean curvature is given by
\begin{equation}\label{eq: H^2}
H =(n-1)\frac{ \sqrt{V(r)}}{r}.
\end{equation}

\subsection{Definition of charged mass and its properties}\mbox{}\vspace{.5ex}

\noindent We generalize the elementary proof of the spacetime Penrose inequality in spherical symmetry outlined by \cite{Leetext}. The approach is to define a quasi-local mass (the Hawking mass) associated with spheres (level sets of an appropriate radial function defined on the initial data set) and then show that this quantity is non-decreasing along a radial flow and approaches the total energy in the designated asymptotic region. Motivated by Disconzi-Khuri \cite{DK}, we modify the Hawking mass by adding a term involving a cosmological constant and the total charge of the system (see Equation \eqref{eq: charge}).  

\begin{definition}[Charged Hawking (Misner-Sharp) mass]
\begin{equation}\label{CHM}
\begin{array}{l}
m_{\mathrm{H}}(r) := \dfrac{r^{n-2}}{2}\left[ 1  + \dfrac{r^2}{(n-1)^2} \left(k_{\scriptscriptstyle \mathbb{S}}^2 - H^2 \right) \right]\\[2.5ex]
m_q(r) := \dfrac{q^2}{2r^{n-2}}\\[2.5ex]
m_{\ell}(r) := \dfrac{r^{n}}{2\ell^2}
\end{array}\quad\longrightarrow\quad m_{\text{CH}}(r) :=m_{\mathrm{H}}(r)+m_{\ell}(r)  +m_q(r)
\end{equation}   
\end{definition}  Observe that $k_{\scriptscriptstyle \mathbb{S}}^2 - H^2 = \theta^+ \theta^-$, the product of the outward and inward expansions. For a MOTS, this term vanishes. Hence, the quasi-local mass can be rewritten in a standard form as an integral over a surface $S$. 

As mentioned before, once we have defined the Charged Hawking mass, we must prove two things. First, that it is increasing, and second, that it tends to the total energy of the space-time (ADM energy or AdS energy) in the asymptotic limit $r\to\infty$.

\subsubsection{Monotonicity}
\begin{proposition}\label{lemma: mass mono} If the constraint equations \eqref{eq: constraint EM} are satisfied and the system is spherically symmetric \eqref{sphericalmetric}, the charged Hawking mass is non-decreasing as a function of $r$.
\end{proposition}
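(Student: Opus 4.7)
The plan is to differentiate $m_{\mathrm{CH}}(r)$ piecewise and exhibit $m_{\mathrm{CH}}'(r)$ as a manifestly non-negative quantity by invoking all three constraints of \eqref{eq: constraint EM} together with the dominant energy condition. From \eqref{sphericalmetric} one finds the scalar curvature $R(g) = -(n-1)V'/r + (n-1)(n-2)(1-V)/r^2$, and a direct differentiation of $m_{\mathrm{H}}$ yields
\[
m_{\mathrm{H}}'(r) \;=\; \frac{r^{n-1}}{2(n-1)}R(g) \;+\; \frac{n\,r^{n-1}k_{\scriptscriptstyle \mathbb{S}}^2}{2(n-1)^2} \;+\; \frac{r^n k_{\scriptscriptstyle \mathbb{S}}k_{\scriptscriptstyle \mathbb{S}}'}{(n-1)^2}.
\]
The first step is to eliminate $R(g)$ via the Hamiltonian constraint \eqref{eq: constraint mu}, producing an expression linear in $\hat\mu$, the $k$-combinations, $|E|_g^2$, and the cosmological term $n(n-1)/\ell^2$.

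Two ``extra'' contributions should then cancel by construction. The cosmological term in \eqref{eq: constraint mu} contributes $-nr^{n-1}/(2\ell^2)$ to $m_{\mathrm{H}}'(r)$, which is exactly $-m_\ell'(r)$. For the Maxwell piece, the electric constraint \eqref{eq: constraint E} in spherical symmetry reduces to $\partial_r(r^{n-1}\sqrt V\,E_{\scriptscriptstyle I})=0$, so this quantity is a radial constant, and the charge normalization \eqref{eq: charge} forces $|E|_g^2 = (n-1)(n-2)q^2/(2r^{2(n-1)})$; its contribution to $m_{\mathrm{H}}'(r)$ through the Hamiltonian constraint then evaluates to $(n-2)q^2/(2r^{n-1}) = -m_q'(r)$. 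After these two cancellations the problem reduces to controlling the remaining $k_{\scriptscriptstyle I}k_{\scriptscriptstyle \mathbb{S}}$, $k_{\scriptscriptstyle \mathbb{S}}^2$, and $k_{\scriptscriptstyle \mathbb{S}}k_{\scriptscriptstyle \mathbb{S}}'$ terms. For those I would invoke the momentum constraint \eqref{eq: constraint J}, whose only nontrivial component in spherical symmetry is radial and whose $E\wedge B$ term vanishes since $B=0$. Computing the radial component of $\text{div}_g(k-(\text{tr}_g k)g)$ for \eqref{sphericalmetric}--\eqref{eq: spherically symmetric extrinsic curvature} yields
\[
8\pi \hat J_r \;=\; \frac{(n-1)k_{\scriptscriptstyle I}-k_{\scriptscriptstyle \mathbb{S}}}{r} \;-\; k_{\scriptscriptstyle \mathbb{S}}',
\]
and trading $k_{\scriptscriptstyle I}$ for $\hat J_r$ and $k_{\scriptscriptstyle \mathbb{S}}'$ via this identity collapses the remaining $k$-terms into the compact form
\[
m_{\mathrm{CH}}'(r) \;=\; \frac{8\pi r^{n-1}}{n-1}\left(\hat\mu \;-\; \frac{rk_{\scriptscriptstyle \mathbb{S}}}{n-1}\hat J_r\right).
\]

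To close the argument I would apply the dominant energy condition $\hat\mu\geq|\hat J|_g = \sqrt V\,|\hat J_r|$ (the latter using that $\hat J$ is purely radial by spherical symmetry), together with $\sqrt V = rH/(n-1)$ from \eqref{eq: H^2}, to obtain
\[
m_{\mathrm{CH}}'(r) \;\geq\; \frac{8\pi r^n}{(n-1)^2}|\hat J_r|\bigl(H-|k_{\scriptscriptstyle \mathbb{S}}|\bigr),
\]
which is non-negative because the outermost MOTS property forces $|k_{\scriptscriptstyle \mathbb{S}}|\leq H$ outside $\partial M$ (equivalently $\theta_\pm\geq 0$ outside, with $\theta_+=0$ only on $\partial M$). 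The main obstacle I expect is pure coefficient-chasing: three independent cancellations (cosmological, charge, and momentum-constraint) must all fit together with the precise $(n-1)$-normalizations of \eqref{CHM}, \eqref{eq: charge}, and \eqref{eq: spherically symmetric extrinsic curvature}. None of the individual pieces is deep, but it is exactly this calibration --- especially of the charge correction $m_q$ to absorb the Maxwell energy --- that makes the charged Hawking mass the right quasi-local quantity for this problem.
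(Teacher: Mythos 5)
Your proposal is correct and follows essentially the same route as the paper: differentiate $m_{\mathrm{CH}}$, substitute the Hamiltonian and radial momentum constraints in spherical symmetry to reach $m_{\mathrm{CH}}'=\frac{8\pi r^{n-1}}{n-1}\bigl(\hat\mu-\frac{rk_{\scriptscriptstyle\mathbb{S}}}{n-1}\hat J_r\bigr)$ (identical to the paper's $\hat\mu-\frac{k_{\scriptscriptstyle\mathbb{S}}}{H}(\hat J\cdot\nu)$ since $\hat J\cdot\nu=\sqrt{V}\,\hat J_r$ and $H=(n-1)\sqrt{V}/r$), and conclude via the dominant energy condition together with $|k_{\scriptscriptstyle\mathbb{S}}|\leq H$ from the outermost property. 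The only cosmetic difference is that you fix $|E|_g^2$ exactly from the radial ODE form of $\operatorname{div}_g E=0$ plus the charge normalization, whereas the paper reaches the same cancellation of $m_q'$ against the Maxwell energy via the divergence theorem and Cauchy--Schwarz (an equality in spherical symmetry anyway).
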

\begin{proof}
 First, notice that using \eqref{eq: H^2}, we can rewrite the Hawking mass as
 \begin{equation}\label{eq: m_ch -q^2=F}
m_{\mathrm{H}}(r)= \frac{r^{n-2}}{2}\left[ 1 -V(r)+ r^2\left(\frac{k_{\scriptscriptstyle \mathbb{S}}}{n-1}\right)^{\!\!2} \right].
\end{equation}
It is easy to check that the scalar curvature of the spherically symmetric metric \eqref{sphericalmetric} is 
\begin{equation}\label{scalg}
R(g)  = \frac{n-1}{r^{n-1}} \frac{\td }{\td r} \left[ r^{n-2} (1-V(r)) \right].
\end{equation}
Thus,
\begin{equation}\label{eq: m'H+m'l}\begin{aligned}
m_{\mathrm{H}}&'(r)+m_{\ell}'(r)= \frac{1}{2}\frac{r^{n-1} }{n-1}R(g) +\frac{r^{n-1}}{2} \left( n \left(\frac{k_{\scriptscriptstyle \mathbb{S}}}{n-1}\right)^{\!2}  + 2 r \frac{k_{\scriptscriptstyle \mathbb{S}}}{n-1} \frac{k'_{\scriptscriptstyle \mathbb{S}}}{n-1}\right)+n\frac{r^{n-1}}{2\ell^2}\\
&=\frac{r^{n-1} }{2(n-1)}\left[R(g) -  \frac{n}{n-1}k_{\scriptscriptstyle \mathbb{S}}^2+\frac{n(n-1)}{\ell^2}\right]+r^{n-1}\frac{k_{\scriptscriptstyle \mathbb{S}}}{n-1} \left( n \frac{k_{\scriptscriptstyle \mathbb{S}}}{n-1}+ r  \frac{k'_{\scriptscriptstyle \mathbb{S}}}{n-1}\right)\\
&=\frac{r^{n-1} }{2(n-1)}\left\{R(g) +(\text{Tr}_g k)^2-|k|^2_g+\frac{n(n-1)}{\ell^2}-2k_{\scriptscriptstyle \mathbb{S}}\left(\text{Tr}_g k-\frac{nk_{\scriptscriptstyle \mathbb{S}}}{n-1}- r  \frac{k'_{\scriptscriptstyle \mathbb{S}}}{n-1}\right)\right\}\\
&=\frac{r^{n-1} }{2(n-1)}\left\{R(g) +(\text{Tr}_g k)^2-|k|^2_g+\frac{n(n-1)}{\ell^2}-2\frac{k_{\scriptscriptstyle \mathbb{S}}}{H}(\nu\cdot{}\text{div}_g (k - (\text{tr}_g k)g))\right\}.
    \end{aligned}
\end{equation}
From the first to the second line, we have used that in spherical symmetry, $\text{Tr}_g k = k_{\scriptscriptstyle I} + k_{\scriptscriptstyle \mathbb{S}}$ and $|k|_g^2 = k_{\scriptscriptstyle I}^2 +  k_{\scriptscriptstyle \mathbb{S}}^2/(n-1)$. Meanwhile, the last line is obtained by contracting the RHS of \eqref{eq: constraint J} (recall that we are assuming $\star_g(E\wedge B)=0$) with the normal $\nu_i =(\td r)_i/\sqrt{V}$. Assuming the constraints \eqref{eq: constraint EM}, we have
\begin{equation}\label{eq: m'CH}
m_{\mathrm{CH}}'(r)=m_{\mathrm{H}}'(r)+m_{\ell}'(r)+m_{q}'(r)=\frac{8\pi r^{n-1} }{n-1}\left[\hat{\mu}-\frac{k_{\scriptscriptstyle \mathbb{S}}}{H}(\hat{J}\cdot{}\nu)\right]+\frac{r^{n-1} }{n-1}|E|_g^2-\frac{n-2}{2r^{n-1}} q^2.
\end{equation}

We now focus on the last term:
\begin{align*}
     m'_q(r)&=-\frac{n-2}{2r^{n-1}} q^2\overset{\eqref{eq: charge}}{=}-\frac{n-2}{2r^{n-1}}\frac{2}{(n-1)(n-2)} \frac{1}{\omega_{n-1}^2}\left( \lim_{R \to \infty} \int_{S_R} E(\nu)\; \mathrm{vol}(S_R^{n-1})\right)^{\!2}\\
     &=-\frac{1}{(n-1)r^{n-1}\omega_{n-1}^2} \lim_{R \to \infty} \left(\int_{V_{r\!R}}\text{div}_gE\; \mathrm{vol}(V_{r\!R})-\int_{S_r} E(\nu)\; \mathrm{vol}(S_r^{n-1})\right)^{\!2}\geq\\
     &\geq-\frac{r^{n-1}}{(n-1)(r^{n-1}\omega_{n-1})^2} \lim_{R \to \infty} \left(\int_{S_r} E(\nu)^2\; \mathrm{vol}(S_r^{n-1})\right)\left(\int_{S_r}1^2\; \mathrm{vol}(S_r^{n-1})\right)\\
     &=-\frac{r^{n-1}}{(n-1)r^{n-1}\omega_{n-1}} \int_{S_r} |E|_g^2\; \mathrm{vol}(S_r^{n-1})=-\frac{r^{n-1}|E|_g^2}{n-1},
\end{align*}
 where $V_{r\!R}$ denotes the space in between the spheres $S_r$ and $S_R$. In the second line, we have used the divergence theorem; in the third, the Cauchy-Schwarz inequality for integrals together with the constraint $\text{div}_g E =0$, and in the last two equalities, we have used that $E$ is purely radial. 

Gathering the previous two equations, we have
\begin{equation}\label{eq: m'>0}
    m_{\text{CH}}'(r)\geq\frac{8\pi r^{n-1} }{n-1}\left[\hat{\mu}-\frac{k_{\scriptscriptstyle \mathbb{S}}}{H}(\hat{J}\cdot{}\nu)\right]\geq0.
\end{equation}
The last inequality follows from the fact that, as argued in \cite{Leetext}, the outermost condition implies $H > k_{\scriptscriptstyle \mathbb{S}}$ in the interior of $M = [r_+, \infty) \times \mathbb{S}^{n-1}$ (otherwise, at some point $H = k_{\scriptscriptstyle \mathbb{S}}$ by continuity, implying the existence of another MOTS outside the boundary). Then the dominant energy condition $\hat{\mu}\geq |\hat{J}| \geq |(\hat{J} \cdot \nu)|$ implies $m_{\text{CH}}'(r) \geq 0$, with equality if and only if $\hat{\mu} = |\hat{J}| =0$ and $|E| = 0$.  \end{proof}

\subsubsection{Asymptotic limit} \mbox{}\vspace{.5ex}

\noindent We wish to show the charged Hawking mass coincides with the total energy in the limit $r \to \infty$, i.e., with $E_{\mathrm{ADM}}$ \eqref{eq: EADM} (notice that a spherically symmetric initial data set can be shown to have vanishing linear momenta; thus the ADM energy coincides with the ADM mass) or $E_{\mathrm{hyp}}$ \eqref{eq: Ehyp}. This result is quite standard, but we give the argument for convenience. 
\begin{lemma}\label{lemma: lim m_CH} \begin{equation}
    \lim_{r \to \infty}m_{\mathrm{CH}}(r) =  \begin{cases} E_{\mathrm{ADM}} & \text{asymptotically flat } (\ell = \infty) \\ E_{\mathrm{hyp}} & \text{asymptotically hyperbolic } (\ell = 1) \end{cases}
    \end{equation}
\end{lemma}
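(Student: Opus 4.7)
The plan is to rewrite $m_{\mathrm{CH}}(r)$ so that only the terms that clearly recover the energy functionals~\eqref{eq: EADM}--\eqref{eq: Ehyp} remain. Using $H^2 = (n-1)^2 V(r)/r^2$ from~\eqref{eq: H^2}, the definition~\eqref{CHM} collapses to
\[
m_{\mathrm{CH}}(r) \;=\; \frac{r^{n-2}}{2}\Big(1-V(r)+\tfrac{r^2}{\ell^2}\Big) \,+\, \frac{r^n\, k_{\scriptscriptstyle\mathbb{S}}(r)^2}{2(n-1)^2} \,+\, \frac{q^2}{2r^{n-2}}.
\]
The charge term vanishes trivially since $n\ge 3$, and translating the decay hypotheses on $k$ into pointwise bounds on $k_{\scriptscriptstyle\mathbb{S}}$ (via the warped-product form~\eqref{sphericalmetric}) gives $r^n k_{\scriptscriptstyle\mathbb{S}}^2\to 0$ in both settings, since $\kappa$ exceeds $(n-2)/2$ and $n/2$ respectively. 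It therefore suffices to compute $\lim_{r\to\infty}\tfrac{r^{n-2}}{2}(1-V+r^2/\ell^2)$ and identify it with $\mathcal{E}$.

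For the asymptotically flat case ($\ell\to\infty$), this is essentially the standard ADM computation in spherical symmetry. Pulling $g$ back along the asymptotic chart and writing $g_{ij}=\delta_{ij}+(V^{-1}-1)x^i x^j/r^2$ in Cartesian coordinates with $r=|x|$, a short differentiation gives $\partial_i g_{ij}-\partial_j g_{ii} = (n-1)(V^{-1}-1)\,x^j/r^2$. Substituting into~\eqref{eq: EADM} with $\nu^j=x^j/r$ and $\mathrm{vol}(\mathbb{S}_r^{n-1}) = r^{n-1}\mathrm{vol}(\mathbb{S}^{n-1})$ collapses the surface integral to $\tfrac{r^{n-2}}{2}(V^{-1}-1)$, and $V\to 1$ lets us replace this by $\tfrac{r^{n-2}}{2}(1-V)$ in the limit.

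The asymptotically hyperbolic case is the main obstacle, since~\eqref{eq: Ehyp} has four $W$-weighted pieces that need to be disentangled. Because $g$ and $b$ share the angular part $r^2 g_{\scriptscriptstyle\mathbb{S}}$, the tensor $\hat g:=g-b$ has only a radial component $\hat g_{rr}=u(r):=V^{-1}-V_{\mathrm{hyp}}^{-1}$. A direct computation with the $b$-Christoffel symbols of the warped product yields $(\mathrm{div}_b \hat g)_r = (V_{\mathrm{hyp}} u)' + (n-1) V_{\mathrm{hyp}} u/r$, while on the vector $\mathcal{V}=W\partial_r$ the two terms $(\mathrm{Tr}_b \hat g)\,\td W$ and $\hat g(\mathcal{D}W,\cdot)$ each evaluate to $r V_{\mathrm{hyp}} u$ and hence cancel. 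The remaining two contributions combine into the integrand $(n-1)V_{\mathrm{hyp}}^2 u/r$, and integrating over $\mathbb{S}_r^{n-1}$ produces
\[
E_{\mathrm{hyp}} \;=\; \tfrac{1}{2}\lim_{r\to\infty}\,\frac{V_{\mathrm{hyp}}(r)}{V(r)}\,\big(1+r^2 - V(r)\big)\, r^{n-2}.
\]
Asymptotic hyperbolicity forces $V/V_{\mathrm{hyp}}\to 1$, so the prefactor drops out and we recover $\tfrac{r^{n-2}}{2}(1+r^2-V)\to E_{\mathrm{hyp}}$, completing the identification.
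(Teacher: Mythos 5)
Your proof is correct and takes essentially the same route as the paper: discard the charge and $r^{n}k_{\scriptscriptstyle\mathbb{S}}^{2}$ terms using the decay hypotheses, identify $\lim_{r\to\infty}\tfrac{r^{n-2}}{2}\bigl(1-V+r^{2}/\ell^{2}\bigr)$ with $E_{\mathrm{ADM}}$ via the Cartesian form of \eqref{eq: EADM}, and with $E_{\mathrm{hyp}}$ via the purely radial deviation tensor $\hat g$ in \eqref{eq: Ehyp}. Your explicit evaluation of $\mathrm{div}_b\hat g$ and the cancellation of the $(\mathrm{Tr}_b\hat g)\,\td W$ and $\hat g(\mathcal{D}W,\cdot)$ terms merely carries out the step the paper compresses into ``a computation shows,'' arriving at the same limit $\tfrac{r^{n-2}}{2}\tfrac{V_{\mathrm{hyp}}}{V}(V_{\mathrm{hyp}}-V)$.
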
 \begin{proof} 
Since the charged term $m_q$ decreases to zero when $r\to\infty$, we can ignore that term in the following.

Consider first the asymptotic flatness case, which implies $m_\ell=0$ and $\text{Tr}_g k = O_1(r^{-\kappa})$ for some $\kappa > (n-2)/2$. The latter implies $k_{\scriptscriptstyle \mathbb{S}}= O_1(r^{-\kappa-2})$, so $r^n k_{\scriptscriptstyle \mathbb{S}}^2 = O_1(r^{n - 2\kappa - 4}) = O_1 (r^{-2 - \epsilon})$ for some $\epsilon > 0$. We then have that the limit of \eqref{eq: m_ch -q^2=F} is
\begin{equation}\label{eq: lim m_H} 
    \lim_{r\to\infty}m_{\mathrm{H}}(r)=\lim_{r\to\infty} \frac{r^{n-2}}{2}( 1 -V(r)).
\end{equation}

For the spherically symmetric metric \eqref{sphericalmetric}, the fall-off relative to the Euclidean metric is
\begin{equation}
    g - \delta = \left(\frac{1}{V}-1\right) \td r^2 = \frac{1-V}{V} \frac{x_i x_j}{r^2} \td x^i \td x^j,
\end{equation} where $\{x^i \}$ is a standard Cartesian chart with $r = \sqrt{x^i x^j \delta_{ij}}$. Substituting this into the ADM energy formula, one finds 
\begin{equation}
    E_{\mathrm{ADM}} = \lim_{r \to \infty} \frac{r^{n-1}}{2} \frac{1-V(r)}{rV(r)}  \overset{\eqref{eq: lim m_H}}{=} \lim_{r\to \infty} m_{\mathrm{H}}(r),
\end{equation}  
where in the last equality, we also used $V(r)\to1$ when $r\to\infty$. 

Consider now the asymptotically hyperbolic case (we fix $\ell =1$ without loss of generality). From equation \eqref{eq: hatg}, observe that $V_{\mathrm{hyp}}(r)/V(r) \to 1$ and that the term $r^n k_{\scriptscriptstyle \mathbb{S}}^2 = O_1(r^{n - 2\kappa - 4}) = O_1 (r^{-2 - \epsilon})$ can be neglected in the limit $r\to\infty$. Thus,
\begin{equation}\label{eq: lim m_H hyperbolic} 
    \lim_{r\to\infty}m_{\mathrm{CH}}(r)=\lim_{r\to\infty} \frac{r^{n-2}}{2}\left( 1 +\frac{r^2}{\ell^2}-V(r)\right).
\end{equation}

For a spherically symmetric metric, the deviation tensor \eqref{eq: hatg} is given by
\begin{equation}
    \hat{g}:=g-b= \frac{V_{\mathrm{hyp}}-V}{V_{\mathrm{hyp}}V}\td r^2= \frac{V_{\mathrm{hyp}}-V}{V}\mathcal{V}\otimes\mathcal{V},
\end{equation} where $\mathcal{V}_i=(\td r)_i/\sqrt{V_{\mathrm{hyp}}}$.

A computation shows 
\begin{equation}
    E_{\mathrm{hyp}} = \lim_{r \to \infty} \frac{V_{\mathrm{hyp}}(r)}{V(r)}\frac{ r^{n-2}}{2}(V_{\mathrm{hyp}}(r) -V(r)) =\lim_{r\to\infty}m_{\mathrm{CH}}(r),
\end{equation} where we are using the asymptotic behaviour of $V(r)$ as $r \to \infty$.
\end{proof}

\subsection{Penrose inequality}\mbox{}\vspace{.5ex}

\noindent In this section, we prove Theorem \ref{Thm:PI}.
\begin{proof}
Let $r_+$ correspond to the location of the apparent horizon. Then
\begin{equation}
     \mathcal{E}\overset{\text{lemma}}{\underset{\ref{lemma: lim m_CH}}{=}}\lim_{r\to\infty}m_{\mathrm{CH}}(r)\overset{\text{lemma}}{\underset{\ref{lemma: mass mono}}{\geq}}m_\mathrm{CH}(r_+) = \frac{r_+^{n-2}}{2}+\frac{r^n_+}{2\ell^2}+\frac{q^2}{2r_+^{n-2}},
\end{equation}
where in the last equality we have used the definition of $m_{\mathrm{CH}}$ and the equality $k_{\scriptscriptstyle \mathbb{S}}^2 - H^2 = \theta^+ \theta^-$, which vanishes for a MOTS. The Penrose inequality \eqref{eq: PenroseInequality} follows from the area being given by $A=r_+^{n-1}\omega_{n-1}$.

Now it remains to prove the rigidity statement. Namely, if the equality saturates, the initial data set must arise from a spatial hypersurface in the static Reissner–Nordström(-AdS$_{n+1})$ spacetime. Let us recall the definition and the required properties of this space-time. Its metric can be written as
\begin{equation}\label{eq: RN metric}
\mathbf{g}_{\scriptscriptstyle\mathrm{RN}} =  -V_{\scriptscriptstyle\mathrm{RN}}(r) \td t^2 + \frac{\td r^2}{V_{\scriptscriptstyle\mathrm{RN}}(r)} + r^2g_{\scriptscriptstyle \mathbb{S}}, \qquad V_{\scriptscriptstyle\mathrm{RN}}(r)= 1 - \frac{2M}{r^{n-2}} + \frac{Q^2}{r^{2n-4}} + \frac{r^2}{\ell^2};
\end{equation}
where $r$ ranges in $(r_0,\infty)$, the unbounded interval where $V_{\scriptscriptstyle\mathrm{RN}}(r)>0$.

The initial value problem to be considered is given on a spherically symmetric hypersurface, i.e. given as the level set $M_{\scriptscriptstyle (f)}=\{t = f(r)\}$ for some $f$ satisfying $|V_{\scriptscriptstyle\mathrm{RN}} f'|<1$ ($M_{\scriptscriptstyle (f)}$ must be space-like). In the following, we denote $s$ the sign of $f'$. 

\begin{lemma}\label{lemma: initial data RN}
    The initial data on $M_{\scriptscriptstyle (f)}$ induced by \eqref{eq: RN metric} satisfying the constraint equations \eqref{eq: constraint EM} is:
\begin{equation}\label{eq: RN initial data}
      \begin{alignedat}{3}
        &g^{\scriptscriptstyle (f)} = \frac{\td r^2}{V_{\!\scriptscriptstyle f}} + r^2g_{\scriptscriptstyle \mathbb{S}} &\qquad\qquad&\hat{\mu}^{\scriptscriptstyle (f)}=0 &\qquad\qquad&  E^{\scriptscriptstyle (f)} =E^{\scriptscriptstyle (f)}_{\scriptscriptstyle I} \td r\\
        &  k^{\scriptscriptstyle (f)} = k_{\scriptscriptstyle I}^{\scriptscriptstyle(f)} \frac{\td r^2}{V_{\!\scriptscriptstyle f}} + \frac{k_{\scriptscriptstyle \mathbb{S}}^{\scriptscriptstyle(f)}}{n-1} r^2 g_{\scriptscriptstyle \mathbb{S}} &\qquad\qquad&\hat{J}^{\scriptscriptstyle (f)}=0  &\qquad\qquad&B^{\scriptscriptstyle (f)}=0,
        \end{alignedat}
        \end{equation}
    where     
    \begin{equation}\label{eq: def Vf, k^f E_i}
      \begin{alignedat}{3}&V_{\!\scriptscriptstyle f}:= \frac{V_{\scriptscriptstyle\mathrm{RN}}}{1 - (V_{\scriptscriptstyle\mathrm{RN}} f')^2}\qquad k_{\scriptscriptstyle I}^{(f)} =\frac{\td}{\td r} \left(r\frac{k_{\scriptscriptstyle \mathbb{S}}^{\scriptscriptstyle(f)}}{n-1}\right)  \qquad k_{\scriptscriptstyle \mathbb{S}}^{(f)} = -s\frac{n-1}{r} \sqrt{V_{\!\scriptscriptstyle f} - V_{\scriptscriptstyle\mathrm{RN}}}\\
    &E^{\scriptscriptstyle (f)}_{\scriptscriptstyle I}=\sqrt{\frac{(n-1)(n-2)}{2V_f}}\frac{Q}{r^{n-1}}.
        \end{alignedat}
        \end{equation}
    Moreover, the parameters $(M,Q)$ of $V_{\scriptscriptstyle\mathrm{RN}}$ correspond to the ADM/hyperbolic mass and the charge.  
\end{lemma}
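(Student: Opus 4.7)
The plan is a direct computation of the induced geometry and Maxwell field on $M_{\scriptscriptstyle(f)}$, exploiting spherical symmetry to reduce everything to a few scalar functions of $r$. Parametrising $M_{\scriptscriptstyle(f)}$ by $(r,\Omega)$ and substituting $\td t = f'(r)\,\td r$ into \eqref{eq: RN metric} gives
\[
g^{\scriptscriptstyle (f)} \;=\; \frac{1 - V_{\scriptscriptstyle\mathrm{RN}}^{2} (f')^{2}}{V_{\scriptscriptstyle\mathrm{RN}}}\,\td r^{2} + r^{2} g_{\scriptscriptstyle \mathbb{S}} \;=\; \frac{\td r^{2}}{V_{\!\scriptscriptstyle f}} + r^{2} g_{\scriptscriptstyle \mathbb{S}},
\]
with $V_{\!\scriptscriptstyle f} > 0$ by the spacelike assumption $|V_{\scriptscriptstyle\mathrm{RN}} f'|<1$. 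The normal to $M_{\scriptscriptstyle(f)}$ is proportional to $\td(t - f(r))$, with $\mathbf{g}_{\scriptscriptstyle\mathrm{RN}}$-squared norm $-V_{\!\scriptscriptstyle f}^{-1}$; normalising and fixing the future orientation yields $n^{t} = \sqrt{V_{\!\scriptscriptstyle f}}/V_{\scriptscriptstyle\mathrm{RN}}$ and $n^{r}$ of sign $s=\mathrm{sign}(f')$ and magnitude $V_{\scriptscriptstyle\mathrm{RN}}\sqrt{V_{\!\scriptscriptstyle f}}|f'|$. A useful algebraic identity is $(V_{\scriptscriptstyle\mathrm{RN}} f')^{2}V_{\!\scriptscriptstyle f} = V_{\!\scriptscriptstyle f} - V_{\scriptscriptstyle\mathrm{RN}}$, which will recur throughout.

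Spherical symmetry already forces $k^{\scriptscriptstyle(f)}$, $E^{\scriptscriptstyle(f)}$, and $B^{\scriptscriptstyle(f)}$ to have the form shown in \eqref{eq: RN initial data}, so only a few scalars remain to be identified. The angular block $k^{\scriptscriptstyle(f)}_{\scriptscriptstyle\mathbb{S}}$ is extracted from $\nabla_{A}n_{B} = -\Gamma^{r}_{AB}n_{r}$, where the only nonvanishing contribution is $\Gamma^{r}_{AB} = -V_{\scriptscriptstyle\mathrm{RN}}\,r\,(g_{\scriptscriptstyle\mathbb{S}})_{AB}$; matching the result with the second line of \eqref{eq: RN initial data} and using the identity above produces the stated $k^{\scriptscriptstyle(f)}_{\scriptscriptstyle\mathbb{S}}/(n-1) = -s\sqrt{V_{\!\scriptscriptstyle f}-V_{\scriptscriptstyle\mathrm{RN}}}/r$. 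The radial component $k^{\scriptscriptstyle(f)}_{\scriptscriptstyle I}$ is then forced by the momentum constraint \eqref{eq: constraint J}: once $B^{\scriptscriptstyle(f)}=0$ is established and $\hat J^{\scriptscriptstyle(f)}=0$ (see below), the right-hand side vanishes, and projecting the left-hand side along $\nu = \sqrt{V_{\!\scriptscriptstyle f}}\,\partial_{r}$ reduces to the first-order relation $k^{\scriptscriptstyle(f)}_{\scriptscriptstyle I} = \frac{\td}{\td r}\bigl(r\,k^{\scriptscriptstyle(f)}_{\scriptscriptstyle\mathbb{S}}/(n-1)\bigr)$ claimed in \eqref{eq: def Vf, k^f E_i}. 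Alternatively, one computes $\iota_{r}^{a}\iota_{r}^{b}\nabla_{a}n_{b}$ directly from the RN Christoffel symbols and arrives at the same formula.

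For the Maxwell sector, take the RN potential $A = -cQ\,r^{-(n-2)}\td t$ with $c := \sqrt{2/((n-1)(n-2))}$ chosen so that the charge integral \eqref{eq: charge} returns exactly $Q$. Then $F = \td A$ has only a $\td t\wedge \td r$ component, so its contraction with $n^{a}$ pulled back to $M_{\scriptscriptstyle(f)}$ is purely radial, reproducing the stated $E^{\scriptscriptstyle(f)}_{\scriptscriptstyle I}$, while $B^{\scriptscriptstyle(f)} = \star_{g^{\scriptscriptstyle(f)}}(\iota^{*}F) = 0$ (for $n = 3$, first apply the electromagnetic duality rotation described in the discussion preceding \eqref{sphericalmetric}). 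Since the only spacetime matter is the EM field, $\hat\mu^{\scriptscriptstyle(f)} = \hat J^{\scriptscriptstyle(f)} = 0$ and the constraints \eqref{eq: constraint EM} hold by construction because $\mathbf{g}_{\scriptscriptstyle\mathrm{RN}}$ solves the Einstein-Maxwell equations. Finally, to identify $(M,Q)$ with the physical mass and charge, apply Lemma \ref{lemma: lim m_CH} together with the expansion $V_{\scriptscriptstyle\mathrm{RN}} = 1 + r^{2}/\ell^{2} - 2Mr^{-(n-2)} + Q^{2}r^{-(2n-4)}$ and the fall-off $V_{\!\scriptscriptstyle f}\to V_{\scriptscriptstyle\mathrm{RN}}$ as $r\to\infty$ (from the decay of $f'$ forced by asymptotic flatness/hyperbolicity of $M_{\scriptscriptstyle(f)}$) to obtain $\mathcal{E}=M$; a direct evaluation of \eqref{eq: charge} returns $q = Q$. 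The main obstacle I anticipate is the consistent sign and orientation bookkeeping tying together the future-pointing $n^{a}$, the branch of $\sqrt{V_{\!\scriptscriptstyle f}-V_{\scriptscriptstyle\mathrm{RN}}}$, and the ADM sign convention for $k$; once these are fixed, the rest collapses into a short, mechanical computation thanks to spherical symmetry.
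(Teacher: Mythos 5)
Your overall route is the same as the paper's: pull back the metric to the graph $t=f(r)$ to get $g^{\scriptscriptstyle(f)}=\td r^2/V_{\!\scriptscriptstyle f}+r^2g_{\scriptscriptstyle\mathbb{S}}$, compute the angular extrinsic curvature directly (the paper does this via the lapse/shift of the coordinate $\tau=t-f(r)$, you via the spacetime Christoffels -- equivalent), and use spherical symmetry to reduce everything to scalars. Your determination of $k_{\scriptscriptstyle I}^{(f)}$ from the radial projection of the momentum constraint with $\hat J^{\scriptscriptstyle(f)}=0$, $B^{\scriptscriptstyle(f)}=0$ is legitimate and not circular (the constraints hold automatically because the RN spacetime solves the field equations with vanishing uncharged matter), and in fact mirrors the identity the paper itself uses in the rigidity argument. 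The sign/branch bookkeeping you flag is harmless: the paper also only fixes it by choosing the sign of $f$ afterwards.

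The genuine problem is in the Maxwell sector. The normalization of the gauge potential is not yours to choose: it is fixed by demanding that $F$ together with $\mathbf{g}_{\scriptscriptstyle\mathrm{RN}}$ (with the term $Q^2/r^{2n-4}$ in $V_{\scriptscriptstyle\mathrm{RN}}$) solve the Einstein--Maxwell equations in the paper's conventions; only then does your appeal to ``the constraints hold by construction'' apply, and only then is the identification $q=Q$ a nontrivial, correct statement rather than a definition. With your choice $c=\sqrt{2/((n-1)(n-2))}$ one gets $E(\nu)=c\,(n-2)Q/r^{n-1}=\sqrt{2(n-2)/(n-1)}\,Q/r^{n-1}$, which inserted into \eqref{eq: charge} gives $q=\tfrac{2}{n-1}Q$, not $Q$, and an $E^{\scriptscriptstyle(f)}_{\scriptscriptstyle I}$ disagreeing with \eqref{eq: def Vf, k^f E_i} except when $n=3$; the constant consistent with the paper's conventions is $c=\sqrt{(n-1)/(2(n-2))}$. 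So as written your Maxwell computation fails to reproduce the stated $E^{\scriptscriptstyle(f)}_{\scriptscriptstyle I}$ and the claim $q=Q$ in higher dimensions, and it also omits the check that your chosen $(A,\mathbf{g}_{\scriptscriptstyle\mathrm{RN}})$ actually satisfies \eqref{eq: EOM wrt g}. The paper avoids both issues at once by solving the Hamiltonian constraint \eqref{eq: constraint mu} for $2V_{\!\scriptscriptstyle f}(E^{\scriptscriptstyle(f)}_{\scriptscriptstyle I})^2=(n-1)(n-2)Q^2/r^{2(n-1)}$, which delivers the correctly normalized field and then yields $q=Q$ directly from \eqref{eq: charge}; you should either adopt that step or fix the constant and verify the field equations explicitly.
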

\begin{proof}
     We introduce the new time coordinate $\tau = t - f(r)$. It is then a short computation to check
\begin{equation}
     \mathbf{g}_{\scriptscriptstyle\mathrm{RN}}  = -V_{\!\scriptscriptstyle f} \td \tau^2 + \frac{1}{V_{\!\scriptscriptstyle f}} \left( \td r - V_{\!\scriptscriptstyle f}V_{\scriptscriptstyle\mathrm{RN}}  f' \td \tau\right)^2 + r^2g_{\scriptscriptstyle \mathbb{S}}.
\end{equation} Thus, the induced metric on $M_{\scriptscriptstyle (f)}=\{\tau=0\}$ is simply
\begin{equation}
    g^{\scriptscriptstyle (f)} = \frac{\td r^2}{V_{\!\scriptscriptstyle f}} + r^2g_{\scriptscriptstyle \mathbb{S}},
\end{equation} with lapse and shift
\begin{equation}
    \textbf{N} =  \sqrt{V_{\!\scriptscriptstyle f}}, \qquad \mathcal{N}_i = - V_{\scriptscriptstyle\mathrm{RN}} f' (\td r)_i.
\end{equation}
The extrinsic curvature can be obtained from the lapse and the shift through the formula
\begin{equation}
    k^{\scriptscriptstyle (f)}_{ij} = \frac{1}{2N} \left[ D_i N_j + D_j N_i \right],
\end{equation}
where $D$ denotes the $g^{\scriptscriptstyle (f)}$-Levi-Civita connection. It is not hard to check that the non-zero components are  
\begin{equation}\label{Kf}
    \frac{k_{\scriptscriptstyle \mathbb{S}}^{\scriptscriptstyle(f)}}{n-1} = -\frac{s}{r}\sqrt{V_{\!\scriptscriptstyle f} - V_{\scriptscriptstyle\mathrm{RN}}}\qquad \qquad k^{\scriptscriptstyle (f)}_{\scriptscriptstyle I}:=\nu^i\nu^jk^{\scriptscriptstyle (f)}_{ij}  =\frac{\td}{\td r} \left(r\frac{k_{\scriptscriptstyle \mathbb{S}}^{\scriptscriptstyle(f)}}{n-1}\right), 
\end{equation}
where $\nu^i = \sqrt{V_{\!\scriptscriptstyle f}} \partial_r^i$. For that computation, it is useful to realize
\begin{equation}
    V_{\scriptscriptstyle\mathrm{RN}} f'=s\sqrt{1-\frac{V_{\scriptscriptstyle\mathrm{RN}}}{V_{\!\scriptscriptstyle f}}}.
\end{equation} 

Since the Reissner–Nordström solution has no uncharged matter, $\hat{\mu}^{\scriptscriptstyle (f)}=0$ and $\hat{J}^{\scriptscriptstyle (f)}=0$. Recall, because of the spherical symmetry, $B^{\scriptscriptstyle (f)}=0$ and the electric field is radial $E=E_{\scriptscriptstyle I}\td r$. The term $E_{\scriptscriptstyle I}$ can be easily obtained using the constraint equation \eqref{eq: constraint mu}
\begin{equation}\label{EfieldRN}
2V_{\!\scriptscriptstyle f}(E^{\scriptscriptstyle (f)}_{\scriptscriptstyle I})^2=2|E^{\scriptscriptstyle (f)}|_{g^{\scriptscriptstyle(f)}}^2=(n-1)(n-2)\left(\frac{Q}{r^{n-1}}\right)^{\!2}, 
\end{equation}
where we have used $\text{Tr}_g k^{\scriptscriptstyle(f)} = k^{\scriptscriptstyle(f)}_{\scriptscriptstyle I} + k^{\scriptscriptstyle(f)}_{\scriptscriptstyle \mathbb{S}}$, $|k^{\scriptscriptstyle(f)}|_g^2 = (k^{\scriptscriptstyle(f)}_{\scriptscriptstyle I})^2 +  (k^{\scriptscriptstyle(f)}_{\scriptscriptstyle \mathbb{S}})^2/(n-1)$ and Equations \eqref{Kf},  Equation \eqref{scalg} with $V_{\!\scriptscriptstyle f}$ instead of $V$, and the definition of $V_{\scriptscriptstyle\mathrm{RN}}$ given in \eqref{eq: RN metric}. Therefore, we indeed recover the initial data given in Equation \eqref{eq: RN initial data}.

It only remains to prove $M$ is the ADM/hyperbolic mass and $Q$ is the charge. The former is immediate using lemma \ref{eq: lim m_H hyperbolic} and the limits given in \eqref{eq: lim m_H} or \eqref{lemma: lim m_CH}. For the charge we use the definition given in \eqref{eq: charge}, the electric field we computed in \eqref{EfieldRN}, and the fact $E(\nu)=E_{\scriptscriptstyle I}\sqrt{V_{\!\scriptscriptstyle f}}$:
\begin{equation}
    q =\frac{1}{\omega_{n-1}} \lim_{r \to \infty} \int_{S_r}\frac{Q}{r^{n-1}} \mathrm{vol}(S_r^{n-1})=Q.
\end{equation}
\end{proof}

The goal now is to show, for a given initial data set $(g,k,E,B)$ that saturates the charged spherically symmetric Penrose inequality, there is some function $f(r)$ such that $g=g^{\scriptscriptstyle (f)}$ and $k=k^{\scriptscriptstyle (f)}$.

Recall that $(g,k)$ are spherically symmetric, hence they are given by Equation \eqref{sphericalmetric} for some $V$, $k_{\scriptscriptstyle I}$ and $k_{\scriptscriptstyle \mathbb{S}}$. It is clear that in order to have $g=g^{\scriptscriptstyle (f)}$, we need to have $V=V_{\!\scriptscriptstyle f}$ for some $f$.

Suppose now equality holds in the Penrose inequality. This implies 
\begin{equation} \mathcal{E} = m_{\mathrm{CH}}(r) = \frac{Q^2}{2r^{n-2}} + \frac{r^{n-2}}{2} \left[ 1 - V(r) + \frac{r^2}{\ell^2} + \left(r\frac{k_{\scriptscriptstyle \mathbb{S}}}{n-1} \right)^{\!2}\right]
\end{equation}
for every $r\geq r_+$ (of course, this means the RHS is constant). It is immediate to solve for $V$ and find
\begin{equation}\label{eq: V=VRN+KS^2}
    V(r)= V_{\scriptscriptstyle\mathrm{RN}}(r) +\left(r\frac{k_{\scriptscriptstyle \mathbb{S}}}{n-1} \right)^{\!2}+\frac{2}{r^{n-2}}(M-\mathcal{E}).
\end{equation}
Notice that the mass $M$ of the RN slice $M_{\scriptscriptstyle (f)}$ must be equal to the mass $\mathcal{E}$ of the original slice $M$. Moreover, we have seen that $V=V_{\!\scriptscriptstyle f}$. Plugging these expressions back in \eqref{eq: V=VRN+KS^2} and solving for $f'$ leads to
\begin{equation}\label{eq: sol fprime}
    f'(r)^2 = \frac{1}{V_{\scriptscriptstyle\mathrm{RN}}^2}\frac{ \left(r\dfrac{k_{\scriptscriptstyle \mathbb{S}}}{n-1}\right)^{\!2}}{\left(r\dfrac{k_{\scriptscriptstyle \mathbb{S}}}{n-1}\right)^{\!2}+V_{\scriptscriptstyle\mathrm{RN}}},
\end{equation}
which satisfies the condition $|f'(r)V_{\scriptscriptstyle\mathrm{RN}}| <1$, since $V_{\scriptscriptstyle\mathrm{RN}}$ is positive in the interval $(r_0,\infty)$. For this choice of $f$ (defined up to an overall constant and up to a sign on each interval of the support of $f$), we obtain from lemma \ref{lemma: initial data RN} the value of the spherical part of the extrinsic curvature:
\begin{equation}\label{eq: k^f=k}
    k_{\scriptscriptstyle \mathbb{S}}^{(f)} = -s\frac{n-1}{r} \sqrt{V_{\!\scriptscriptstyle f} - V_{\scriptscriptstyle\mathrm{RN}}}\overset{\eqref{eq: V=VRN+KS^2}}{=}-s|k_{\scriptscriptstyle \mathbb{S}}|.
\end{equation}
We can choose the sign of $f$ to get $k_{\scriptscriptstyle \mathbb{S}}^{(f)}=k_{\scriptscriptstyle \mathbb{S}}$. To obtain the radial part of the extrinsic curvature, recall that contracting the RHS of \eqref{eq: constraint J} with $\nu_i =(\td r)_i/\sqrt{V}$ (as we used to derive Equation \eqref{eq: m'H+m'l}) leads to
\begin{equation}
    H(\hat{J}\cdot \nu)=k_{\scriptscriptstyle I}-\frac{k_{\scriptscriptstyle \mathbb{S}}}{n-1}- r  \frac{k'_{\scriptscriptstyle \mathbb{S}}}{n-1}\quad\overset{\eqref{eq: k^f=k}}{\longrightarrow}\quad k_{\scriptscriptstyle I} = \frac{\td}{\td r} \!\left(\!r\frac{k_{\scriptscriptstyle \mathbb{S}}^{(f)}}{n-1}\!\right)+H(\hat{J}\cdot \nu)\overset{\eqref{eq: def Vf, k^f E_i}}{=}k_{\scriptscriptstyle I}^{(f)}+H(\hat{J}\cdot \nu).
\end{equation}
Let us now prove $(\hat{J},\hat{\mu})=(\hat{J}^{\scriptscriptstyle (f)},\hat{\mu}^{\scriptscriptstyle (f)})=0$, which incidentally would also prove $k_{\scriptscriptstyle I}=k_{\scriptscriptstyle I}^{(f)}$. For that, we use $m_{\mathrm{CH}}'=0$. From \eqref{eq: m'CH} we get
\begin{equation}
|E|^2_g =\frac{(n-1)(n-2)}{2}\frac{q^2}{r^{2(n-1)}}, \qquad\qquad  \hat{\mu} = \frac{k_{\scriptscriptstyle \mathbb{S}}}{H}(\hat{J} \cdot \nu).
\end{equation}
The first condition implies, following the results of lemma \ref{lemma: initial data RN}, the electric field $E = (E\cdot \nu) \nu$ of the initial data set is that of a Reissner–Nordström-AdS black hole with charge $q = Q$. The second equation implies that both $\hat{\mu}$ and $\hat{J}\cdot \nu$ vanish (and hence so does $\hat{J}$, since it must be radial). Indeed, we have seen already that $H > k_{\scriptscriptstyle \mathbb{S}}$ for all $r> r_+$, but the dominant energy condition forces $\hat{\mu} \geq |\hat{J}|$. 

Therefore, we have shown that the inequality is saturated if and only if the initial data set can be isometrically embedded as a spatial hypersurface in the Reissner–Nordström(-AdS) spacetime.
\end{proof}

\section{The Gauss-Bonnet gravity theory}
 \noindent In this section, we prove Theorem \ref{PI:GB}, the spacetime Penrose inequality in spherical symmetry within the setting of Einstein-Gauss-Bonnet gravity (see the review \cite{Fernandes:2022zrq}). 
\subsection{Equations of motion}\mbox{}\vspace{.5ex}

\noindent Consider the action functional in spacetime dimension $n+1$:
\begin{equation}
    S(\mathbf{g},A,\psi) = \int \left(\mathbf{R} + \frac{n(n-1)}{\ell^2} + \alpha L_{GB} \right) \mathrm{vol}(\mathbf{g}) + \int_{\mathbf{M}}\widetilde{L}(\mathbf{g},\psi),
\end{equation}
where  $\alpha >0$ is a constant coupling parameter, $\psi$ denotes the collection of additional fields (charged and non-charged), and $\widetilde{L}$ their Lagrangian. The equations of motion are
\begin{equation}
    \mathbf{G}_{ab} + \alpha \mathbf{H}_{ab} = \frac{n(n-1)}{2\ell^2} \mathbf{g}_{ab}+8\pi\widetilde{T}_{ab},
    \end{equation}where $\widetilde{T}_{ab}$ is the energy-momentum tensor associated with the non-metric fields, and we have the geometric `source' term
    \begin{equation}
        \mathbf{H}_{ab}:=2 \left[ \mathbf{R} \mathbf{R}_{ab} - 2 \mathbf{R}_{ac}\mathbf{R}^{c}_{~b} - 2 \mathbf{R}^{cd} \mathbf{R}_{acbd} + \mathbf{R}_a^{~cde}\mathbf{R}_{bcde} \right] - \frac{1}{2} \mathbf{g}_{ab} L_{GB}. 
    \end{equation}

\subsection{Initial Value Problem}\mbox{}\vspace{.5ex}

\noindent Consider an initial data set $(M,g,k)$ with the conventions and definitions of section \ref{subsection: initial value problem EM}. Contracting the field equation with $n^an^b$ and $\imath^a_in^b$ (and after a long computation), we obtain the constraints \cite{Torii:2008ru}:
\begin{subequations}\label{eq: constraints GB}
\begin{align}
    &16\pi \widetilde{\mu}=M+\frac{n(n-1)}{\ell^2}  + \alpha (M^2 - 4M_{ij}M^{ij} + M^{ijkl}M_{ijkl})\label{GB:Hcons}\\ 
      &8 \pi \widetilde{J}_j= N_j + 2 \alpha (M N_j - 2M_j^{~i} N_i + 2 M^{ik}N_{jik} - M_j^{~lik} N_{ikl})\label{GB:Mcons},
\end{align}
\end{subequations}
where $M_{ijkl}:=(\imath^*\mathbf{R})_{ijkl}=\imath^a_i\imath^b_j\imath^c_k\imath^d_l\mathbf{R}_{abcd}$ is the pullback of the $\mathbf{g}$-Riemann tensor (as usual, we define $M_{jl}:=g^{ik}M_{ijkl}$ and $M=g^{jl}M_{jl}$), and $N_{ijk}:=\imath^a_i \imath^b_j\imath^c_kn^d\mathbf{R}_{abcd}$ (we also define $N_j:=g^{ik}N_{ijk}$). In the previous computation, we have used the Gauss-Codazzi and the Codazzi-Mainardi equations:
\begin{subequations}\label{eq: Gauss-Codazzi-Mainardi}
\begin{align}
        &M_{ijkl}= R_{ijkl} + k_{ik} k_{jl} - k_{il} k_{jk}\\
        &N_{ijk}=D_ik_{jk}-D_jk_{ik},
    \end{align}
\end{subequations}  where $R^i_{~jkl}$ is the $g$-Riemann tensor. Notice 
\begin{subequations}\label{eq: M and N}
\begin{align}
 &M=  R(g) + (\text{Tr}_g k)^2 - k_{ij}k^{ij}\\
 &N_j=D^i k_{ij} - D_j (\tr_g k),
\end{align} 
\end{subequations} so when $\alpha =0$, \eqref{eq: constraints GB} reduces to the usual constraints of general relativity.

It is a straightforward exercise to include a Maxwell field using the results from the previous section. We will restrict attention to the Gauss-Bonnet term here. Choquet-Bruhat has addressed the well-posedness of the Cauchy problem associated with these constraints \cite{CB}. 

\subsection{Spherically symmetric initial data set}\mbox{}\vspace{.5ex}

\noindent We consider spherically symmetric data $(M,g,k)$ with metric $g$ and extrinsic curvature $k$ as in \eqref{sphericalmetric}.

\subsection{Definition of Gauss-Bonnet quasi-local mass and its properties}\mbox{}\vspace{.5ex}

\noindent Let's first introduce a quasi-local mass analogous to the charged Hawking mass, adapted for such data satisfying the constraint equations \eqref{GB:Hcons} and \eqref{GB:Mcons}: 
\begin{definition}[Generalized Gauss-Bonnet Hawking mass]
\begin{equation}
m_{\mathrm{GB}}(r):= m_{\mathrm{H}}(r)+m_{\ell}(r)+2\widetilde\alpha\frac{m_{\mathrm{H}}(r)^2 }{r^n} 
\end{equation}
\end{definition} 
We recall that $\widetilde\alpha:= (n-2)(n-3) \alpha$, which vanishes for $n=3$ ($d=4$), in which case $m_{\mathrm{GB}}$ equals the Hawking mass. This is consistent with the GB term being topological. A related definition of a quasi-local Misner-Sharp mass valid for spherically symmetric spacetimes was presented in \cite{Maeda:2007uu}.

\subsubsection{Monotonicity}
\begin{proposition}
    For spherically symmetric initial data satisfying the Gauss-Bonnet constraints, the generalized Gauss-Bonnet quasi-local mass is a monotonically non-decreasing function of $r$. 
\end{proposition}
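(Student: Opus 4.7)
The plan is to mimic the Einstein-Maxwell computation of Proposition~\ref{lemma: mass mono} and then absorb the Gauss-Bonnet corrections into the Hamiltonian and momentum constraints by means of two algebraic identities peculiar to spherical symmetry.

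Introduce the bookkeeping variable $\sigma(r) := 2m_{\mathrm{H}}(r)/r^{n-2}$, which by \eqref{eq: m_ch -q^2=F} equals $1 - V(r) + r^{2}(k_{\scriptscriptstyle \mathbb{S}}/(n-1))^{2}$. Direct differentiation of the definition of $m_{\mathrm{GB}}$ gives
$$m_{\mathrm{GB}}'(r) = m_{\mathrm{H}}'(r)\left(1 + \frac{2\widetilde\alpha\,\sigma}{r^{2}}\right) + m_\ell'(r) - \frac{\widetilde\alpha\,n\,r^{n-5}\sigma^{2}}{2}.$$
Setting $E = B = 0$ in \eqref{eq: m'H+m'l} and using the tensors $M$ and $N_j$ of \eqref{eq: M and N}, one reorganizes this into
$$\frac{2(n-1)}{r^{n-1}}\, m_{\mathrm{GB}}'(r) = M + \frac{n(n-1)}{\ell^{2}} + \left[\frac{2\widetilde\alpha\sigma\,M}{r^{2}} - \frac{n(n-1)\widetilde\alpha\sigma^{2}}{r^{4}}\right] - \frac{2k_{\scriptscriptstyle \mathbb{S}}}{H}(\nu\cdot N)\left(1 + \frac{2\widetilde\alpha\sigma}{r^{2}}\right).$$

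The crux of the argument is the following pair of identities, valid for spherically symmetric initial data:
\begin{align*}
\alpha\bigl(M^{2} - 4 M_{ij}M^{ij} + M^{ijkl}M_{ijkl}\bigr) &= \frac{2\widetilde\alpha\sigma}{r^{2}}\,M - \frac{n(n-1)\widetilde\alpha\sigma^{2}}{r^{4}},\\
2\alpha\nu^{j}\bigl(M N_{j} - 2 M_{j}^{~i} N_{i} + 2 M^{ik} N_{jik} - M_{j}^{~lik} N_{ikl}\bigr) &= \frac{2\widetilde\alpha\sigma}{r^{2}}\,(\nu\cdot N).
\end{align*}
Both are consequences of the warped-product structure via the Gauss and Codazzi equations \eqref{eq: Gauss-Codazzi-Mainardi}. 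In an orthonormal frame adapted to $g = \td r^{2}/V + r^{2}g_{\scriptscriptstyle \mathbb{S}}$, the pulled-back spacetime Riemann tensor $M_{ijkl}$ is parameterized by just two sectional curvatures (a radial-sphere value and a sphere-sphere value); a short calculation shows the latter equals $\sigma/r^{2}$, while the former is determined by $M$ through the trace $M = 2(n-1)(\text{radial-sphere}) + (n-1)(n-2)(\sigma/r^{2})$. Substitution into the Gauss-Bonnet invariant — which in this two-parameter reduction takes the compact form $(n-1)(n-2)(n-3)[4(\text{radial-sphere})(\sigma/r^{2}) + (n-4)(\sigma/r^{2})^{2}]$ — yields the Hamiltonian identity. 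The momentum identity proceeds analogously: spherical symmetry confines the non-vanishing components of $N_{ijk}$ to those with one radial and two sphere indices, so each of the four contractions on the left collapses onto a scalar multiple of $\nu\cdot N$, and the coefficients assemble to $2\widetilde\alpha\sigma/r^{2}$.

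Granting these identities, the Hamiltonian constraint \eqref{GB:Hcons} recasts the first four terms above as $16\pi\widetilde\mu$, and the momentum constraint \eqref{GB:Mcons} converts $(\nu\cdot N)(1 + 2\widetilde\alpha\sigma/r^{2})$ into $8\pi\,\widetilde J\cdot\nu$. The monotonicity then follows exactly as in Proposition~\ref{lemma: mass mono}:
$$m_{\mathrm{GB}}'(r) = \frac{8\pi r^{n-1}}{n-1}\left[\widetilde\mu - \frac{k_{\scriptscriptstyle \mathbb{S}}}{H}\,\widetilde J\cdot\nu\right] \geq 0,$$
where the inequality uses the dominant energy condition and the strict outermost-MOTS inequality $H > k_{\scriptscriptstyle \mathbb{S}}$. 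The main obstacle is the momentum identity: the Hamiltonian identity is a brief manipulation once $M_{ijkl}$ is cut down to two independent scalars, but the momentum version requires careful bookkeeping of the four tensor contractions against the radially supported $N_{ijk}$ and verifying that the various coefficients conspire to produce the single factor $2\widetilde\alpha\sigma/r^{2}$.
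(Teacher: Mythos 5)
Your proposal is correct and follows essentially the same route as the paper: your two key identities are precisely the paper's spherically symmetric evaluations of $\widetilde{\mu}^{(\alpha)}$ and $\nu\cdot\widetilde{J}^{(\alpha)}$ in \eqref{eq: mu and J alpha}, rewritten with $\sigma/r^{2}=2m_{\mathrm{H}}/r^{n}$ and with $m_{\mathrm{H}}'$ eliminated via \eqref{eq: m'H+m'l}, and the remaining steps (imposing the constraints, then the dominant energy condition together with $H>k_{\scriptscriptstyle \mathbb{S}}$ from the outermost condition) coincide with the paper's argument. The paper establishes those identities by tabulating the components of $M_{ijkl}$ and $N_{ijk}$, which is exactly the two-scalar reduction you sketch.
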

\begin{proof} We start by computing the $\alpha$ factors of \eqref{eq: constraints GB}, which we denote:
\begin{subequations}\label{eq: constraint GB}
\begin{align}
    &\widetilde{\mu}^{(\alpha)}:=M^2 - 4M_{ij}M^{ij} + M^{ijkl}M_{ijkl}\label{eq: mu_alpha}\\
    &\widetilde{J}^{(\alpha)}_i:=M N_j - 2M_j^{~i} N_i + 2 M^{ik}N_{jik} - M_j^{~lik} N_{ikl},
\end{align}
\end{subequations}
in a spherical symmetric setting \eqref{sphericalmetric}. For that, we break the indices into the normal and tangent parts using the decomposition $g^{ij}=\nu^i\nu^j+X^i_AX^j_B(\td \Omega^2)^{AB}/r^2$ where $\nu^i=\sqrt{V}\partial_r^i$ and $X^i_A$ denotes the pullback/pushforward of the embedding $X:\mathbb{S}^{n-1}\hookrightarrow M$. We compute the following terms:
    \begin{subequations}\allowdisplaybreaks
\begin{align}
    &M_{1B1D}:=\nu^iX^j_A\nu^kX^l_B M_{ijkl} = \left(-\frac{r''}{r} + \frac{k_{\scriptscriptstyle \mathbb{S}}}{n-1}k_{\scriptscriptstyle I} \right) r^2\td \Omega^2_{BD}\\
    &M_{ABCD} =\frac{2m_{\mathrm{H}}(r)}{r^n} r^4(\td \Omega^2_{AB}\td \Omega^2_{BD} -\td \Omega^2_{AD}\td \Omega^2_{CD}) \\
    &M_{11} = (n-1) \left( -\frac{r''}{r} + \frac{k_{\scriptscriptstyle \mathbb{S}}}{n-1}k_{\scriptscriptstyle I} \right)    \\
    & M_{AB} = \left[-\frac{r''}{r} + \frac{   k_{\scriptscriptstyle \mathbb{S}}}{n-1}k_{\scriptscriptstyle I} + (n-2)\frac{2m_{\mathrm{H}}(r)}{r^n}\right] r^2\td \Omega^2_{AB} \\
    &M  = 2 (n-1)\left(-\frac{ r''}{r} +   \frac{ k_{\scriptscriptstyle \mathbb{S}}}{n-1}k_{\scriptscriptstyle I}\right) + (n-1)(n-2)\frac{2m_{\mathrm{H}}(r)}{r^n}\\
    &N_1 =H\left[k_{\scriptscriptstyle I}-\frac{k_{\scriptscriptstyle \mathbb{S}}}{n-1}-r\frac{k'_{\scriptscriptstyle \mathbb{S}}}{n-1}\right]\\
 &N_{1AB}  = -\frac{N_1}{n-1} r^2\td \Omega^2_{AB} \\
 &N_{A1B} = \frac{N_1}{n-1} r^2\td \Omega^2_{AB},
\end{align}
    \end{subequations}
where we recall $m_{\mathrm{H}}$ is given by equation \eqref{eq: m_ch -q^2=F}. After a long computation and using the previous expressions, one can check that \eqref{eq: constraint GB} can be rewritten as
\begin{subequations}\label{eq: mu and J alpha}
    \begin{align}
        &\widetilde{\mu}^{(\alpha)}  =(n-1)(n-2)(n-3)\frac{8m_{\mathrm{H}}(r)}{r^n}\left[\frac{m_{\mathrm{H}}'(r)}{r^{n-1}}-n\frac{ m_{\mathrm{H}}(r)}{2r^n}+\frac{k_{\scriptscriptstyle \mathbb{S}}}{n-1}\frac{N_1}{H}\right]\\
        & (\nu\cdot{}\widetilde{J}^{(\alpha)}) = 2 N_1  \frac{(n-2)(n-3)m_{\mathrm{H}}(r)}{r^{n}}.
   \end{align}
\end{subequations} 
Finally, we compute the derivative of the GB-mass:
\begin{align*}
        m_{\mathrm{GB}}&'(r) = m_{\mathrm{H}}'(r)+m_{\ell}'(r)+4\widetilde{\alpha}\frac{m_{\mathrm{H}}(r)m_{\mathrm{H}}'(r)}{r^n}-2n\widetilde{\alpha}\frac{m_{\mathrm{H}}^2(r)}{r^{n+1}}\overset{\eqref{eq: m'H+m'l}}{\underset{\eqref{eq: M and N}}{=}}\\
        &=\frac{r^{n-1} }{2(n-1)}\left\{M+\frac{n(n-1)}{\ell^2}-2\frac{k_{\scriptscriptstyle \mathbb{S}}}{H}(\nu\cdot{}N)\right\}+\frac{2\widetilde{\alpha}m_{\mathrm{H}}(r)}{r^n}\left(2m_{\mathrm{H}}'(r)-n\frac{m_{\mathrm{H}}(r)}{r}\right)\overset{\eqref{eq: constraints GB}}{=}\\
        &=\frac{r^{n-1} }{2(n-1)}\left\{16\pi\widetilde{\mu}-\alpha\widetilde{\mu}^{(\alpha)}-2\frac{k_{\scriptscriptstyle \mathbb{S}}}{H}(8\pi\nu\cdot{}\widetilde{J}-2\alpha \nu\cdot{}\widetilde{J}^{(\alpha)})\right\}+\frac{2\widetilde{\alpha}m_{\mathrm{H}}(r)}{r^n}\left(2m_{\mathrm{H}}'(r)-n\frac{m_{\mathrm{H}}(r)}{r}\right)\overset{\eqref{eq: mu and J alpha}}{=}\\
        &=\frac{8\pi r^{n-1} }{n-1}\left\{\widetilde{\mu}-\frac{k_{\scriptscriptstyle \mathbb{S}}}{H}(\nu\cdot{}\widetilde{J})\right\}\geq 0.
    \end{align*} 
For the final inequality, we have used the same reasoning as for equation \eqref{eq: m'>0}.
\end{proof}

\subsubsection{Assymptotic limit} \mbox{}\vspace{.5ex}

\noindent Since $\frac{m_{\mathrm{H}}(r)^2 }{r^n}\to0$ as $r\to\infty$, it is clear the same arguments as in the previous section show, assuming the ambient spaces have the appropriate asymptotic behaviour, $m_{\mathrm{GB}}(r)$ approaches the ADM/hyperbolic energies in the limit $r \to \infty$.

\subsection{Penrose inequality} \mbox{}\vspace{.5ex}

\noindent The proof of the Penrose inequality \eqref{eq: PI GB} is almost identical to the charged Penrose inequality. The rigidity statement of Theorem \ref{PI:GB} is similar but requires a bit more work. First, notice that the Einstein-Gauss-Bonnet equations admit static, spherically symmetric solutions \cite{Cai}
\begin{equation}\label{GBSch}
    \td s^2 = -V_{\scriptscriptstyle\mathrm{GB}}(r) \td t^2 + \frac{\td r^2}{V_{\scriptscriptstyle\mathrm{GB}}(r)} + r^2g_{\scriptscriptstyle \mathbb{S}}, \qquad V_{\scriptscriptstyle\mathrm{GB}}(r) = 1 + \frac{r^2}{2\widetilde\alpha} \left(1 \mp \left[ 1 + \frac{8 M \widetilde\alpha}{r^n} - \frac{4\widetilde \alpha}{\ell^2}\right]^{1/2}\right).
\end{equation} We must choose the upper $(-)$ sign so the solutions reduce to Schwarzschild-AdS in the limit $\alpha \to 0$. A $t$-constant spatial hypersurface furnishes a time-symmetric initial data set satisfying \eqref{GB:Hcons} and \eqref{GB:Mcons}. In the case of vanishing cosmological constant ($\ell \to \infty$), the above spacetime metric is asymptotically flat, and one finds the parameter $M$ is equal to the ADM mass defined by \eqref{eq: EADM}. For finite $\ell$, the spacetime is asymptotically AdS$_{n+1}$, but with a rescaled length scale $\ell_{\text{eff}}^2 = \ell^2  (1 + \sqrt{1 - 4\widetilde\alpha / \ell^2})/2$. In our conventions, if we fix $\ell = (1 - \widetilde\alpha)^{-1/2}$ with $\widetilde \alpha \in (0,1/2)$ (this ensures $V_{\scriptscriptstyle\mathrm{GB}}(r) \sim 1 + r^2$ as $r \to \infty$), then the hyperbolic energy \eqref{eq: Ehyp} of \eqref{GBSch} $E_{\text{hyp}} = M/(1 -2\widetilde\alpha)$.

The strategy now, as for the charged Einstein-Maxwell case, is to show that if equality is achieved, the resulting data set can be realized as a spatial hypersurface defined by the level set $t = f(r)$ in the spherically symmetric Gauss-Bonnet black hole spacetime \eqref{GBSch}. Following the method discussed in the Reissner-Nordström case, one finds this is possible provided $f(r)$ is chosen to satisfy \eqref{eq: sol fprime}, with the simple replacement $V_{\scriptscriptstyle\mathrm{RN}} \to V_{\scriptscriptstyle\mathrm{GB}}$. The requirement $m'_{\mathrm{GB}} =0$ implies both $\mu$ and $J$ vanish identically. This demonstrates that the inequality is saturated if and only if the initial data set can be isometrically embedded as a spatial hypersurface in the Gauss-Bonnet vacuum spacetime defined by \eqref{GBSch}.

\end{document}